\newtheorem{theorem}{Theorem}[section]
\newtheorem{lemma}[theorem]{Lemma}
\newtheorem{observation}[theorem]{Observation}
\newtheorem{definition}[theorem]{Definition}
\newcommand{\cT}{{\cal T}}
\newcommand{\cH}{{\cal H}}
\newcommand{\cC}{{\cal C}}
\newcommand{\cI}{{\cal I}}
\newcommand{\cX}{{\cal X}}
\newcommand{\cn}{k} 
\newcommand{\gam}{M} 
\newcommand{\skp}{\tilde{z}}
\begin{document}

\begin{center}
{\Large An Efficient Reduction of a Gammoid to a Partition Matroid\footnote[1]{\textbf{Funding} \textit{Marilena Leichter:} Supported in part by the Alexander von Humboldt Foundation with funds from the German Federal Ministry of Education and Research (BMBF) and by the Deutsche Forschungsgemeinschaft (DFG), GRK 2201.\\
\textit{Benjamin Moseley:} Supported in part by a Google Research Award, an Infor Research Award, a Carnegie Bosch Junior Faculty Chair and NSF grants CCF-1824303,  CCF-1845146, CCF-1733873 and CMMI-1938909.\\
\textit{Kirk Pruhs:} Supported in part  by NSF grants  CCF-1535755, CCF-1907673,  CCF-2036077 and an IBM Faculty Award.\\
\textit{Email addresses:} \texttt{marilena.leichter@tum.de} (Marilena Leichter), \texttt{moseleyb@andrew.cmu.edu} (Benjamin Moseley), \texttt{kirk@cs.pitt.edu} (Kirk Pruhs)}}\\
\par\vspace{\baselineskip}

Marilena Leichter\textsuperscript{1}, Benjamin Moseley\textsuperscript{2}, Kirk Pruhs\textsuperscript{3}
\par\vspace{\baselineskip}
\footnotesize{\textsuperscript{1} Department of Mathematics, Technical University of Munich, Germany\\
\textsuperscript{2}Tepper School of Business, Carnegie Mellon University, USA\\
\textsuperscript{3} Computer Science Department, University of Pittsburgh, USA}
\end{center}
\par\vspace{\baselineskip}

\begin{abstract}
Our main contribution is a polynomial-time algorithm
to reduce a $k$-colorable gammoid to a 
$(2k-2)$-colorable partition matroid.
It is known that there are gammoids that can not be reduced to
any $(2k-3)$-colorable partition matroid, so this result is tight. 
We then discuss how such a reduction can be used to obtain
polynomial-time algorithms with better approximation ratios
for various natural problems related to coloring and list
coloring the intersection of matroids. 
\end{abstract}

\section{Introduction}

Our main contribution is a polynomial-time algorithm
to reduce a $k$-colorable gammoid to a 
$(2k-2)$-colorable partition matroid.
Before elaborating on the statement of this result, we first give the necessary 
definitions, and the most relevant prior work. 
After stating the result we then explain some of the
algorithmic ramifications. 

\subsection{Definitions}

 A set system is a pair $\gam = (S,\cI)$ where  $S$ is a universe of $n$ elements and $\cI\subseteq 2^S$ is a collection of subsets of $S$. 
  Sets in $\cI$ are called \textbf{independent} and the
  rank $r$ is the maximum cardinality of a set in $\cI$. 
  A partition $C_1, C_2, \ldots, C_{k}$ of $S$ into independent sets is a 
  $k$-coloring of $\gam$. 
  The \textbf{coloring number} of  $\gam$ is the smallest $k$ such that
  a $k$-coloring exists.
    
 If each element $s \in S$ has an associated list $A_s$ of allowable colors, 
 then a list  coloring is a coloring $C_1, C_2, \ldots, C_{j}$  such
 that if an $s \in S$ is in $C_i$ then $i \in A_s$. 
 The \textbf{list coloring number} of  $\gam$ is the smallest $k$ 
 that guarantees that if for $s \in S$ it is the case that if $|A_s| \ge k$
 then a list coloring exists. 
 
 If $R \subseteq S$ then the restriction of $M$ to $R$,
 denoted by $M \mid R$, is a set system where the universe
 is $S \cap R$ and where a set $I \subseteq S$ is independent if
 and only if $I \subseteq R$ and $I \in \cI$. 

  A hereditary set system is a set system where   if 
  $A \subseteq B \subseteq S$ and $B \in \cI$ then $A \in \cI$.
  A matroid is an hereditary set system with the additional properties that $\emptyset \in \mathcal{I}$ and if 
   $A  \in \cI $, $B \in \cI$, and $|A |  < |B|$ then
   there exists an $s \in B \setminus A$ such that $A \cup \{s\} \in \cI$.
   The intersection of matroids $(S, \cI_1), \ldots ,(S, \cI_\ell)$ on common universe 
   is a hereditary set system with universe $S$ where a set $I \subseteq S$ is 
   independent if and only if for all $i \in [1, \ell]$ it is the case
   that $I \in \cI_i$.

  A \textbf{gammoid} is a matroid that has a graphical representation
  $(D=(V, E), S, Z)$, where $D=(V,E)$ is a directed graph,
  $S \subseteq V$ is a collection of source vertices and $Z \subseteq V$ is a collection
  of sink vertices. 
   In the gammoid that is represented by $D$ a set $I\subseteq S$ is in $\cI$ \emph{if and only if} there exists $|I|$ vertex-disjoint paths from 
   the vertices in $I$ to some subcollection of vertices in $Z$.
 A \textbf{partition matroid} is a type of matroid that can be represented
 by a partition $\cX$ of $S$.
 In the partition matroid that is represented by the partition $\cX$
 a set $Y\subseteq S$ is in $\cI$ \emph{if and only if} $|Y \cap X| \leq 1$ for all $X \in \cX$.~\footnote{Technically one can generalize this to let 
 there be a separate upper bound for each $X_i$ on the number of elements
 $Y$ can obtain from $X_i$, but in this paper we only consider partition
 matroids where the bound is one.}

 A matroid $N$ is a reduction (also called weak map) of a matroid $M$, with the same
 universe, if and only if every independent set in $N$ is also
 an independent set in $M$. If $N$ is a partition matroid,
 then we say that there exists a \textbf{partition reduction} from $M$ to $N$. 
\cite{im2020matroid} defined the following decomposability 
concept, which generalizes partition reduction.

\begin{definition}
   A matroid $\gam = (S, \mathcal{I})$ is \textbf{$(b,c)$-decomposable} if $S$ can be partitioned into sets $X_1, X_2,\dots, X_\ell$
     such that:
     \begin{itemize}
     \item  For all $i \in [\ell]$, it is the case that $|X_i| \leq c \cdot \cn$, where $k$ is the coloring number of $\gam$.
     \item For a set $Y =\{v_1, \ldots, v_{\ell}\}$, consisting of one representative element $v_i$ from
         each
 $X_i$, the matroid $M \mid Y$ is $b$ colorable. 
     \end{itemize}
\end{definition}
If $b=1$ then $X_1, X_2,\dots,X_\ell$ represents a partition matroid.
Thus $(1, c)$-decomposability means there exists a 
partition reduction  where the coloring number increases by
at most a factor of $c$.

 \subsection{Prior Work}

There are two prior, independent, papers in the literature that are directly relevant 
to our results. 
\cite{berczi2019list} showed that any gammoid $\gam$ admits a $(1, (2- \frac{2}{\cn}))$-decomposition. This proof is constructive, and can be converted into an algorithm. The resulting algorithm is essentially
a local search algorithm that selects a neighboring solution in 
the dual matroid in such
a way that an auxiliary potential function always decreases. 
But there seems to be little hope
of getting a better than exponential bound on the time, at least using techniques
from \cite{berczi2019list} as the potential 
can be exponentially large.  
Further \cite{berczi2019list} shows that no better bound is achievable.

 \cite{im2020matroid} gave a polynomial-time
 algorithm to construct a $(18,1)$-decomposition of a gammoid.   
 The reduction  was shown by leveraging prior work on unsplittable flows~\cite{Kleinberg96}. Both paper \cite{berczi2019list,im2020matroid} also observed that partition reductions are relatively easily obtainable for other common types of combinatorial matroids. In particular transversal matroids are $(1,1)$-decomposable \cite{berczi2019list, im2020matroid}, graphic matroids are $(1,2)$-decomposable \cite{berczi2019list,im2020matroid} and paving matroids are $(1,\lceil \frac{r}{r-1}\rceil)$-decomposable if they are of rank $r$ \cite{berczi2019list}. 

The main algorithmic result from \cite{im2020matroid} is:

\begin{theorem} \cite{im2020matroid}
\label{thm:immain}
Consider  matroids $M_1, M_2, \ldots, M_\ell$ defined over a common universe, 
where matroid $M_i$ has coloring number $k_i$.
There is a polynomial-time algorithm that, given  a $(b_i, c_i)$-decomposition
of each matroid $M_i$, computes 
a coloring of the intersection of $M_1, M_2, \ldots, M_\ell$ using  at most $\left(\prod_{i \in [k]} b_i \right) \cdot  \left( \sum_{i \in [k]} c_i \right)  k^*$ colors, where $k^*= \max_{i \in [l]} k_{i}$.
 \end{theorem}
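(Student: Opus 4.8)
\medskip
\noindent\textbf{Proof idea.}
The plan is to reduce a coloring of $M=\bigcap_{i\in[\ell]}M_i$ to a two‑phase procedure: a \emph{coarse} phase that exploits only the cardinality bounds of the decompositions and pays additively (yielding the $\sum_i c_i$ factor), followed by a \emph{refinement} phase that exploits the colorability guarantees of the decompositions and pays multiplicatively (yielding the $\prod_i b_i$ factor). Throughout, $[\ell]$ denotes the index set of the matroids.

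For each $i\in[\ell]$ let $\mathcal{X}^i=\{X^i_1,\dots,X^i_{m_i}\}$ be the partition of $S$ witnessing the given $(b_i,c_i)$-decomposition of $M_i$, so $|X^i_j|\le c_i k_i\le c_i k^*$ for every part. Form the conflict graph $G$ on vertex set $S$ in which $s$ and $s'$ are adjacent iff they lie in a common part of some $\mathcal{X}^i$. Then every independent set of $G$ is a partial transversal of each $\mathcal{X}^i$, i.e.\ it meets every part of every $\mathcal{X}^i$ in at most one element. Each $s$ contributes at most $c_i k^*-1$ neighbours through $\mathcal{X}^i$, so $G$ has maximum degree at most $(\sum_i c_i)k^*-\ell$; a greedy coloring therefore partitions $S$ into classes $C_1,\dots,C_t$ with $t\le(\sum_i c_i)k^*-\ell+1\le(\sum_i c_i)k^*$, each $C_\tau$ a partial transversal of every $\mathcal{X}^i$. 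This phase is clearly polynomial.

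Now fix a coarse class $C_\tau$. For each $i$, extend $C_\tau$ to a full transversal $Y\supseteq C_\tau$ of $\mathcal{X}^i$ (possible since $C_\tau$ has at most one element per part); by the second condition in the $(b_i,c_i)$-decomposition, $M_i\mid Y$ is $b_i$-colorable, hence so is its restriction $M_i\mid C_\tau$. Thus $b_i$ is at least the coloring number of the matroid $M_i\mid C_\tau$, so a $b_i$-coloring $C_\tau=A^i_1\sqcup\cdots\sqcup A^i_{b_i}$ into $M_i$-independent sets can be computed in polynomial time via a matroid‑union algorithm (Nash--Williams, Edmonds), using the independence oracle of $M_i$. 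Taking the common refinement over $i$, for $\vec u=(u_1,\dots,u_\ell)\in[b_1]\times\cdots\times[b_\ell]$ set $B^\tau_{\vec u}=\bigcap_{i\in[\ell]}A^i_{u_i}$; the sets $B^\tau_{\vec u}$ partition $C_\tau$ into at most $\prod_i b_i$ classes, and $B^\tau_{\vec u}\subseteq A^i_{u_i}$ is independent in each $M_i$, hence independent in $M$.

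Ranging over all $\tau\in[t]$ and all tuples $\vec u$, the sets $B^\tau_{\vec u}$ form a partition of $S$ into at most $t\cdot\prod_i b_i\le\big(\prod_{i\in[\ell]}b_i\big)\big(\sum_{i\in[\ell]}c_i\big)k^*$ sets independent in $M$ (discarding empty classes only decreases the count); building $G$, the greedy coloring, and the $O(t\ell)$ matroid‑union computations each run in polynomial time, giving the claim. The step needing most care is not a genuine obstacle but phase two's reliance on \emph{constructing} the $b_i$-colorings of $M_i\mid C_\tau$ efficiently, which is exactly where one must invoke a matroid‑union algorithm rather than the mere existence of such colorings; a secondary technicality is that the decomposition guarantee is stated for full transversals only, so each $C_\tau$ must first be extended and then the coloring restricted. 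The one real idea is the split between the phases: part cardinalities combine additively under sequential coloring of the conflict graph, whereas colorabilities combine multiplicatively under common refinement.
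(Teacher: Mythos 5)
Your proof is correct. The paper itself gives no proof of this theorem (it is quoted from \cite{im2020matroid}), and your two-phase argument --- greedily coloring the conflict graph induced by the union of the partitions, whose maximum degree is at most $\left(\sum_i c_i\right)k^* - \ell$, and then refining each class via Edmonds' matroid partitioning using the $b_i$-colorability of restrictions to (extended) transversals --- is essentially the argument of the cited work, with the two technicalities you flag (extending a partial transversal before invoking the decomposition guarantee, and using matroid union to make the $b_i$-colorings constructive) handled correctly.
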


Combining Theorem \ref{thm:immain} with the decompositions in \cite{berczi2019list,im2020matroid}
one obtains $O(1)$-approximation algorithms for problems that can be expressed
as coloring problems on the intersection of $O(1)$ common combinatorial matroids.
Several natural examples of such problems are given in \cite{im2020matroid}.
\cite{Aharoni06theintersection} showed that for two matroids $M_1$ and
$M_2$ over a common universe with coloring numbers $k_1$ and $k_2$, the coloring number $k$ of $M_1 \cap M_2$ is at most $2 \max(k_1, k_2)$.
The proof in \cite{Aharoni06theintersection}  uses topological arguments  
that do  not  directly  give  an  algorithm  for  finding the coloring. The list coloring number of a single matroid equals its coloring number \cite{seymour1998note}. For the intersection of two matroids \cite{berczi2019list} observed that a list coloring could be efficiently computed by partition reducing each of the matroids. A consequence of the results
in \cite{berczi2019list} is a constructive
proof that the list coloring number of $M_1 \cap M_2$ is at most $2 \max(k_1, k_2)$ if $M_1$ and $M_2$ are each one of the standard
combinatorial matroids. Further a consequence of the results
in \cite{im2020matroid} is an efficient algorithm to compute such a list 
coloring  if $M_1$ and $M_2$ are each one of the standard
combinatorial matroids besides a gammoid. 

For hereditary set systems the coloring number is equal to the  set cover number.
  Set cover has been studied extensively in the field of approximation algorithms.  The greedy algorithm has an approximation ratio of $H_n \approx \ln n$ and this is essentially optimal assuming $P \neq NP$ \cite{Vazirani2001,Williamson2011}.

\subsection{Our Main Result and Its Algorithmic Applications}

We are now ready to state our main result:

\begin{theorem}\label{thm:main}
A partition reduction from a $k$-colorable gammoid to a $(2k-2)$-colorable partition matroid can be computed in polynomial time given a directed graph $D$ that represents $\gam$ as input.  
\end{theorem}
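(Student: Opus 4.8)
The plan is to translate the target statement into a purely graph-theoretic \emph{corridor decomposition} of $D$, to reduce the construction of such a decomposition to a flow-rounding problem via a uniform flow that certifies $k$-colorability, and to produce the decomposition by an iterative peeling procedure; matching the tight constant $2k-2$ is the delicate point. For the preprocessing I would put $D$ into a convenient normal form (every sink has out-degree zero, no element of $S$ is a loop of the gammoid, and the rank of the gammoid equals $|Z|$), all achievable in polynomial time, and then, using matroid union with a max-flow independence oracle, compute in polynomial time the coloring number $k$ and an explicit $k$-coloring $C_1,\dots,C_k$ of $\gam$. Each color class $C_i$, being independent, carries a linkage $\mathcal{P}_i$ of $|C_i|$ vertex-disjoint paths from $C_i$ into $Z$; putting flow $1/k$ on every path of every $\mathcal{P}_i$ gives a flow in $D$ that sends exactly $1/k$ out of each source and obeys unit vertex capacities, because a vertex lies on at most one path of each $\mathcal{P}_i$ and so carries at most $k\cdot\frac1k=1$ units. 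This uniform $1/k$-flow is the quantitative engine of the construction.

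\emph{Reformulation.} Call a vertex set $W\subseteq V$ a \emph{corridor} if, inside the induced subdigraph $D[W]$, every source lying in $W$ has a path to some sink lying in $W$. The easy but central observation is that if $W_1,\dots,W_\ell$ are pairwise vertex-disjoint corridors whose union contains $S$, then the partition of $S$ into the blocks $X_j:=S\cap W_j$ is a partition reduction of $\gam$: for any transversal $x_1,\dots,x_\ell$ with $x_j\in X_j$, route each $x_j$ to a sink inside $W_j$; since the $W_j$ are vertex-disjoint these routes are vertex-disjoint and end at distinct sinks, so the transversal is independent in $\gam$. The coloring number of the resulting partition matroid is just its largest block, so it suffices to cover $S$ by pairwise vertex-disjoint corridors each containing at most $2k-2$ sources.

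\emph{Construction.} I would build the corridors one at a time. Maintain the set of not-yet-assigned vertices, the residual digraph they induce, and the restriction of the uniform $1/k$-flow to it, keeping the invariant that in the residual digraph every remaining source still sends $1/k$ flow to the remaining sinks, so that the residual gammoid is still $k$-colorable. In each round extract one new corridor $W$ that contains at most $2k-2$ of the remaining sources and is ``closed'' enough that its removal preserves the invariant. To find such a $W$ I would use an unsplittable-flow / flow-rounding step in the spirit of Dinitz--Garg--Goemans and of Kleinberg --- the same route taken, with a non-tight constant, in \cite{im2020matroid} --- namely round a suitable min-cost version of the residual fractional routing and then take $W$ to be the union of the rounded paths of one ``entangled cluster'' of sources (those whose rounded paths intersect transitively) together with the vertices those paths strand. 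Every round is a polynomial-time flow computation and removes at least one source, so after at most $|S|$ rounds one obtains a valid corridor cover.

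The main obstacle is purely quantitative: forcing every entangled cluster down to at most $2k-2$ sources so as to match the lower bound of \cite{berczi2019list}, rather than settling for an $O(k)$ bound. This is precisely where the \emph{uniformity} of the witnessing flow --- every source pushing the same amount $1/k$ against unit vertex capacities --- has to be exploited: the rounding must be performed so that the resulting per-vertex congestion, which is what governs how large an entangled cluster can grow, is controlled as tightly as possible, and the extracted corridor must be taken inclusion-minimal among closed sets, so that it is as small as it can be while still being self-contained. One also needs the routine facts that deleting a closed corridor never raises the coloring number of what remains (matroid restriction never does) and never destroys the residual uniform flow (which is exactly what closedness guarantees). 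It is conceivable that a more direct, purely matroid-theoretic recursion on $\gam$ using gammoid-preserving deletions and contractions reaches the same bound without explicit flow rounding; either way the crux is the same tight counting.
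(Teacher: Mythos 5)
There is a genuine gap, in two places. First, your reformulation is too strong. You reduce the theorem to covering $S$ by \emph{pairwise vertex-disjoint} corridors, each containing its own sink and at most $2k-2$ sources, so that each representative of a transversal is routed inside its own private region. A partition reduction does not require this, and the paper's construction does not produce it: there, the parts of $\cX$ share routing infrastructure (backbone paths, branching vertices, highways), and the path assigned to the representative of a part $X$ depends on which representatives were chosen from the \emph{other} parts --- the routing lemma (Lemma \ref{lemma:Trouting}) is a reverse induction that re-routes earlier choices to make room for later ones. Nothing in your proposal shows that vertex-disjoint corridors with the stated size bound always exist, and since this is a strictly stronger structural claim than the theorem itself, establishing it would require at least as much work; it may simply be false for the tight instances of \cite{berczi2019list}.

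Second, even granting the reformulation, the construction is not carried out. You defer the extraction of each corridor to an unsplittable-flow rounding step ``in the spirit of'' \cite{im2020matroid} and then explicitly acknowledge that this route gives only an $O(k)$ (in fact constant-times-$k$ with a large constant) bound, with the passage to $2k-2$ left as ``the crux'' to be handled by making the rounding ``as tight as possible'' and taking corridors ``inclusion-minimal.'' That unquantified step is the entire content of the theorem. The paper gets the tight constant by a different mechanism: cycle-canceling turns the superimposed $k$-coloring flow into an acyclic flow whose partially-saturated edges form a forest and whose saturated edges form highways; each tree is then peeled recursively at branching vertices of its backbone, with a case analysis guaranteeing every extracted part has at most $2k-2$ sources, and feasibility is proved by maintaining a feasible flow invariant on the shrinking tree plus the reverse-induction routing argument. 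None of that combinatorial structure (forest plus highways, branching vertices, the $k$ versus $2k-2$ case split) appears in your proposal, so the quantitative heart of the result is missing.
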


Recall that \cite{berczi2019list} showed that the $(2k-2)$ bound is tight.

Combining our main result, Theorem \ref{thm:main}, with Theorem \ref{thm:immain} from
\cite{im2020matroid} we  obtain significantly better approximation guarantees for matroid
intersection coloring problems in which one of the matroids is a gammoid. 
One example is given by the following problem. Initially assume
that the input consists
of a directed graph $D$ with a designated file server location (a sink) 
and a collection of clients requesting files from the server
at various locations in the networks (the sources). 
The goal is to as quickly as possible get every client the file that they
want from the server, where in each 
time step one can service any collection of clients for which there exist disjoint
paths to the server. This is a matroid coloring problem that can be solved
exactly in polynomial-time ~\cite{Edmonds65}.
Now assume that additionally the input identifies the company to which each
client is employed by, and for each company there is a Service Level Agreement (SLA)
that upper bounds on how many 
clients from that company can be serviced in one time unit.
Now, the problem becomes a matroid intersection coloring problem,
where the intersecting matroids are a gammoid and a partition matroid. 
Using the $(18, 1)$-decomposition of a gammoid 
and Theorem \ref{thm:immain} from \cite{im2020matroid}
one obtains a polynomial-time $36$-approximation algorithm.
However, combining the $(2k-2)$-partition reduction of a gammoid from Theorem \ref{thm:main}
with Theorem \ref{thm:immain} from \cite{im2020matroid} we now obtain
a polynomial-time 3-approximation algorithm.\footnote{This is because Theorem \ref{thm:immain} is a $(1,2)$-decomposition of a gammoid and a partition matroid is in itself a $(1,1)$-decomposition, so Theorem \cite{im2020matroid} states the approximation is 3.}

Another algorithmic consequence is an efficient algorithm to list 
coloring the intersection $M_1 \cap M_2$ of a $k_1$-colorable matroid
$M_1$ and a $k_2$-colorable matroid $M_2$ if 
the list of allowable colors for each element has cardinality at least
$2 \max(k_1, k_2)$, and each of the matroids is either a graphic matroid, paving matroid, transversal matroid, or gammoid. Casting this into the context our running
file server example implies that additionally each client has a list of allowable
times when the file transfer may be scheduled. Our partition reduction
of a gammoid then yields an efficient algorithm to find a feasible
schedule as long as the cardinality of allowable times for each
client is at least $2\max(k_1, k_2)$, where $k_1$ is time required
 if the network had infinite capacity (so only the
SLA constraints come into play), and $k_2$ is the time required if the
SLA allowed infinitely many file transfers  (so only the network capacity constraints come into play).

Set cover is a cannonical algorithmic problem. So
there is considerable interest in discovering examples of natural special types of set cover instances where $o(\log n)$  approximation is possible.
For example, several geometrically based types of instances are known, 
for example covering points in the plane using a discs \cite{mustafa2009ptas}, where  a polynomial time approximation scheme is known.   
Our results provide another example of such a natural special case, namely when the sets come from the intersection of a small number of 
standard combinatorial matroids.

Theorem \ref{thm:main} and its proof   reveal structural properties of gammoids that would seem likely to be of use to address future research on gammoids.

\subsection{Overview of Techniques}
\label{sec:tech}

Given a graphic representation of a gammoid, an optimal coloring  
can be computed in polynomial-time~\cite{Edmonds65}. 
By superimposing the source-sink paths for the various color classes
one can obtain a flow $f$ from the sources to the sinks that moves at most
$k$ units of flow over any vertex. Using standard cycle-canceling 
techniques~\cite{ahuja1993network} one can then convert $f$ to what we call an acyclic flow.
A flow $f$ is acyclic  if for every undirected cycle $C$ in $D$ 
 at least one edge in $C$ 
either has flow $k$ 
or has no flow.
Thus by deleting edges that support no flow in $f$, as they are unnecessary,
we are left with a forest $\cT$ of edges that have flow in the range $[1, k-1]$
and a collection of disjoint paths, which we call highways, that have flow $k$. 
 See Figure \ref{fig:trees}.

\begin{figure}[t] 
 \centering
     \includegraphics[width=0.8\textwidth]{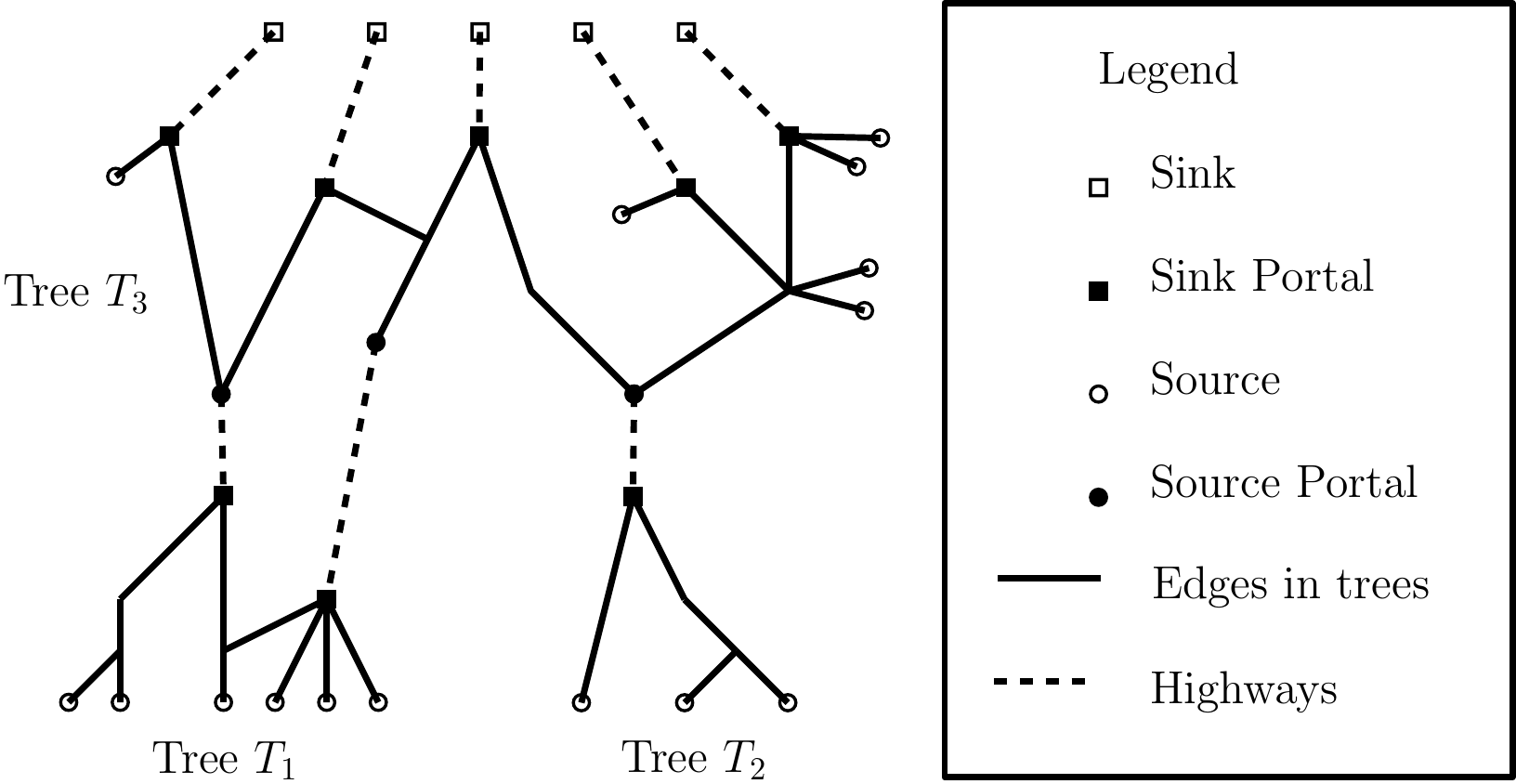}
     \caption{Example of trees created.  Here $k=3$. Source portals are matched to  sink portals along a path not in the trees.  All sink portals will have $k$ units of flow entering them and source portals have $k$ units leaving. \label{fig:trees}}
\end{figure}

Now each part $X$ in the computed partition $\cX$ will be entirely contained in one 
tree $T \in \cT$, and the parts $X$ in a tree $T \in \cT$ are computed
independently of other trees in $\cT$. 
There can be four types of vertices in $T$:
(1) sources $s$ that have outflow 1, (2) source portals $\tilde s$, which are vertices that have a highway directed into them, and which have outflow $k$ in $T$, (3) sink portals $\tilde z$, which are vertices that have a highway directed out of them, and which have inflow $k$ in $T$, and (4) normal vertices. Again see Figure \ref{fig:trees}.

We give a recursive partitioning algorithm for forming the parts $X$ in a tree $T \in \cT$.
On each recursive step our algorithm first identifies a single part $X$ of
at most $2k-2$ sources and an associated sink portal that are in some sense near
each other on 
the edge of $T$. The algorithm then removes these sources and sink portal
from $T$,  and reconnects disconnected sources back into appropriate places in
$T$. The algorithm then recurses on this new tree $T$.

Most of the proof that our partitioning algorithm produces a 
$(1, 2 - \frac{2}{k})$-decomposition focuses on routing individual trees in $\cT$. 
So let  $Y$ be a collection of sources such that for all $X \in \cX$
$| Y \cap X| \le 1$. 

The first key part  is proving 
 that as the partitioning algorithm recurses on a tree $T$, it is always possible to route both
the flow coming into $T$,  and the flowing emanating within $T$,    out of $T$, without routing more than $k$ units of
flow through any vertex in $T$. Note that as the algorithm recurses
the tree $T$ loses a sink portal (which reduces the capacity of the flow that can leave $T$ by $k$)
and loses up to $2k-2$ sources (which means there is less flow emanating in $T$ that
has to be routed out). 

The second key part is to prove
that there is a vertex-disjoint routing from the source portals in $T$ and the sources in $T \cap Y$
to the sink portals in $T$.
To accomplish we trace our partitioning algorithm's recursion backwards.
So in each step a new collection $X$ of sources and a sink portal 
is added back into $T$. We then prove by induction that no matter how
the previously considered sources in $ Y$ were routed, there is always a feasible way to 
route the chosen source in $Y \cap X$ to a sink portal in $T$. 
Then we finish by observing that unioning the routings constructed within the trees with the highways
gives a feasible routing for $Y$.

\section{Preliminaries}
This section introduces notation and other necessary definitions. Let $D=(V,E)$ be a directed graph that represents   a gammoid.  Let $S \subseteq V$ be a set of sources, and 
$Z \subseteq V$ be the collection of sinks.  We may assume without loss
of generality that:

\begin{itemize}
    \item Each vertex $v \in V$ has either out-degree 1
    or in-degree 1. 
    \item
    Each source $s \in S$ has in-degree 0 and out-degree 1. 
   \item
   Each sink $z \in Z$ has in-degree 1 and out-degree 0 and $\vert Z \vert = r$.
   \item
   If $uv$ is an edge in $E$ then $vu$ is not an edge in $E$. 
\end{itemize}

We assume without loss of generality that all color classes have full rank, that is $\vert S \vert = r k$. This can be assumed by adding dummy sources to $S$.

\begin{definition}
\begin{itemize}
    \item 
    A \textbf{feasible flow} in a digraph $D$ from a collection $S' \subset S$ 
    is a collection of paths $\{p^s \mid s \in S'\}$
    such that (1) $p^s$ is a simple path from $s$ to some
    sink, and (2) no vertex or edge in $D$ has more
    than $k$ such paths passing through it. 
      \item 
    A \textbf{feasible routing} in a digraph $D$ from a collection $S' \subset S$ 
    is a collection of paths $\{p^s \mid s \in S'\}$
    such that (1) $p^s$ is a simple path from $s$ to some
    sink, and (2) no vertex or edge in $D$ has more
    than one such path passing through it. 
\end{itemize}
\end{definition}

\section{The Partition Reduction Algorithm}

This section gives the Partition Reduction Algorithm. First, we define a corresponding flow graph. Using a Cycle-Canceling Algorithm, we decompose the flow graph into a collection of trees. Then we algorithmically create the partitions from the local structure in these trees. The analysis of the algorithm is deferred to the next section.

\subsection{Defining the Flow Graph}

Given the digraph $D$ we can compute a minimum
$k$ such that $M$ is $k$-colorable 
in polynomial time using a polynomial-time
algorithm for matroid intersection~\cite{oxley2006matroid}.
Further we can compute the collection of
resulting color classes $\cC = \{C_1, C_2, \ldots, C_k \}$. So $\cC$ is a partition of the sources $S$,
and for each $C_i \in \cC$  there exist $r$ vertex-disjoint paths $p_i^1, \ldots, p_i^r$ in the digraph $D$ from the $r$ sources $C_i$ to $Z$. 
We create an $f$ where the flow $f(u, v)$ on each 
edge $(u, v)$ is initialized to the number
of paths $p_i^j$ that traverse $(u, v)$,
that is $$f(u, v) = \sum_{i=1}^k \sum_{j=1}^r \mathbbm{1}_{(u, v) \in p_i^j}$$

A flow $f$ is acyclic  if for every undirected cycle $C$ 
in $D$ at least one edge in $C$ 
either has flow $k$ 
or has no flow in $f$. An arbitrary flow can 
be converted acyclic by finding cycles in a residual network $D^r$.
This is standard~\cite{ahuja1993network}, but for completeness we define  it here.

For every directed edge $(u, v)$ with $f(u, v) < k$ there
exists a forward directed edge $(u, v)$ in $D^r$ with capacity 
$c_r(u,v):= k - f(u,v)$. 
For every directed edge $(u, v)$ with $f(u, v) > 0 $ there
exists a backward directed edge $(v, u)$ in $D^r$ with capacity 
$c_r(v,u):= f(u,v)$. 
An augmenting cycle in $D^r$ is a simple directed cycle 
with strictly more than two edges.

\medskip
\noindent \textbf{Cycle-Canceling Algorithm.}  While there exists an augmenting cycle $C$ do the following:

\begin{itemize}
\item Let $c:= \min_{(u, v) \in C} c_r(u, v)$ be the minimum capacity of an
edge in $C$. 
\item For each forward edge $(u, v) \in C$,
increase $f(u, v)$ by $c$.
\item For each backward edge $(u, v) \in C$,
decrease $f(u, v)$ by $c$.
\end{itemize}

As every iteration increases the number of edges that have flow $k$ in $f$ or
that have no flow in $f$ by 1, the Cycle-Canceling Algorithm terminates after at most $|E|$ iterations. 
The following observations are straight-forward.

\begin{observation}
The following properties hold when the Cycle-Canceling
Algorithm terminates:

\begin{itemize}
\item $f$ is a feasible flow of $kr$ units of flow
from all the sources.  
\item Every undirected cycle $C$ in $D$  contains at least one 
edge with flow $k$ in $f$ or one edge with no flow in $f$. 
\item
The collection of edges in $D$ that has flow strictly between $0$ and $k$ 
in $f$ forms a forest. 
\item
The collection of  edges  in $D$ with flow $k$ in $f$ are a disjoint union of directed paths, which we will call \textbf{highways}. 
\end{itemize}
\end{observation}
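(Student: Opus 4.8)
The plan is to prove the four properties more or less separately: the first is a loop invariant of the Cycle-Canceling Algorithm, and the other three follow from the absence of augmenting cycles at termination together with the preliminary structural assumptions on $D$.

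For the first property I would track the invariant that $0\le f(e)\le k$ on every edge $e$, that flow conservation holds at every vertex other than a source or a sink, and that every source has out-flow exactly $1$ (so the total out-flow is $kr$). This holds at the start, because $f$ is the superposition of the $k$ color classes: the $r$ paths inside a single class are vertex-disjoint, so at most one of them uses any given edge or vertex, giving $f(e)\le k$; and each source lies in exactly one class and, having in-degree $0$, is the start of exactly one path, so its out-flow is $1$. Each iteration preserves the invariant: the update keeps $f(u,v)$ in $[0,k]$ since the forward and backward residual capacities are $k-f(u,v)$ and $f(u,v)$; it preserves conservation because an augmenting cycle enters and leaves each vertex it visits equally often; and it modifies no edge incident to a source or a sink, because a simple directed cycle with more than two edges cannot pass through one --- a source $s$ has a unique out-edge $(s,v)$, so the only residual arcs at $s$ are $(s,v)$ and $(v,s)$, and a cycle using both would repeat $v$; the sink case is symmetric. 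Finally, at termination I would decompose the integral flow $f$ into simple source-to-sink paths and simple cycles, discard the cycles (which only lowers congestion), and observe that the invariant leaves exactly one remaining path per source with congestion at most $k$, i.e.\ a feasible flow of $kr$ units from all the sources.

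For the second property I would argue by contradiction using the fact that at termination $D^r$ has no augmenting cycle. Since $E$ contains no pair $uv, vu$, every undirected cycle of $D$ has at least three edges. If some undirected cycle $C$ had every edge with flow strictly in $(0,k)$, then orienting $C$ cyclically produces, for each traversed edge $\{u,v\}$, a residual arc from $u$ to $v$ --- the forward residual arc $(u,v)$ when $(u,v)\in E$ (as $f(u,v)<k$), or the backward residual arc $(u,v)$ when $(v,u)\in E$ (as $f(v,u)>0$). These arcs form a simple directed cycle of length at least three in $D^r$, an augmenting cycle, contradicting termination. The third property is then the special case stating that the edges with flow in $(0,k)$ contain no undirected cycle, hence form a forest.

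For the fourth property, let $H$ be the set of edges with flow exactly $k$. Using the assumption that every vertex has in-degree $1$ or out-degree $1$ in $D$, together with conservation and the capacity bound, I would show $H$ has maximum in-degree and out-degree at most $1$: if $v$ has out-degree $1$ via $(v,w)$ then its in-flow equals $f(v,w)\le k$, so at most one edge of $H$ enters $v$, and clearly at most one leaves it; the in-degree-$1$ case is symmetric. Hence $H$ is a vertex-disjoint union of directed paths and directed cycles, and a directed cycle in $H$ would give, through the backward residual arcs of its edges (present since the flow there is $k>0$), an augmenting cycle in $D^r$ of length at least three, contradicting termination; so $H$ is a vertex-disjoint union of directed paths. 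I expect the one point requiring care to be recognizing that this fourth property genuinely needs both ingredients --- the degree/conservation argument to forbid branching in $H$ and the augmenting-cycle argument to forbid directed cycles in $H$ --- since the second property by itself does not rule out a cycle all of whose edges carry flow $k$; the remaining steps (flow decomposition and checking the cycle-canceling invariant) are routine.
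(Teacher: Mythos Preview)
Your proof is correct. The paper does not actually prove this observation --- it simply declares the four properties ``straight-forward'' immediately after noting that the algorithm terminates --- so there is nothing to compare against; you have supplied precisely the details the paper elides, and the approach you take (a loop invariant for feasibility, then the absence of augmenting cycles combined with the preliminary degree assumptions on $D$ for the structural claims) is the natural one.

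One minor refinement worth making: in the first property you propose to discard any cycles that appear in the flow decomposition of $f$, but at termination there are in fact none to discard. Any directed cycle carrying positive flow on every edge would, via its backward residual arcs, yield an augmenting cycle in $D^r$ of length at least three (using that $D$ contains no pair $uv,vu$), contradicting termination. Hence $f$ decomposes purely into source--sink paths, and the feasible flow is $f$ itself rather than $f$ with something stripped away. This matters slightly because properties two through four are stated for the edge values of $f$ as is, so it is cleanest to know that the path collection realizing the ``feasible flow'' in property one sums exactly to $f$. Your identification of the two separate ingredients needed for the fourth property --- conservation plus the degree assumption to rule out branching, and the augmenting-cycle argument to rule out directed cycles --- is exactly right.
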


\subsection{Properties of the Acyclic Flows}

We now give several definitions and 
straightforward observations about  our acyclic flow
$f$ 
that will be useful in our algorithm design and analysis.

\begin{definition}~
\begin{itemize}
    \item A vertex $v$ is a source portal if its in-degree in $D$ is 1, and it has $k$ units of flow
    passing through it in $f$. 

       \item A vertex $v$ is a sink portal if its out-degree in $D$ is 1, and it has $k$ units of flow
    passing through it in $f$. 
    \item Let $\cT$ be the forest  consisting
    of edges in $D$ that have  flow in $f$ strictly
    between $0$ and $k$. 
    \item
    For a tree $T \in \cT$ and a vertex $v \in T$ define $T_v$ to be the forest that results
    from deleting the vertex $v$ from $T$.
\end{itemize}
\end{definition}

\begin{observation}
Each sink $z \in Z$ is in a tree $T \in \cT$ that consists solely of $z$.
\end{observation}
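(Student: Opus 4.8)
The plan is to show that the unique edge entering a sink $z$ must carry exactly $k$ units of flow, hence is a highway edge and not an edge of the forest $\cT$, while no edge leaves $z$ at all; consequently $z$ is incident to no edge of $\cT$ and so is its own connected component.

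First I would invoke the preliminary normalizations: every sink $z \in Z$ has in-degree $1$ and out-degree $0$, and $|Z| = r$, while $|S| = rk$. By the first bullet of the preceding Observation, $f$ is a feasible flow of $kr$ units from all the sources, and feasibility caps the flow on any single edge at $k$; in particular the unique edge entering $z$ carries at most $k$ units. Since all $kr$ units of flow must terminate at the $r$ sinks, and each sink can absorb at most $k$ units along its one incoming edge, a counting argument forces each sink to absorb exactly $k$ units. Equivalently, the edge entering $z$ has flow exactly $k$ in $f$.

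Then I would conclude: the edge into $z$ has flow $k$, so it lies on a highway and not in the forest $\cT$ of edges with flow strictly between $0$ and $k$; and since $z$ has out-degree $0$, no edge leaves $z$. Hence $z$ is incident to no edge of $\cT$, so the tree of $\cT$ containing $z$ consists of the single vertex $z$. The argument is entirely routine; the only step needing a moment's care is the counting that pins each sink's inflow to exactly $k$, which is precisely where the normalizations $|Z| = r$ and $|S| = rk$ are used.
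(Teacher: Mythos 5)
Your proof is correct and follows essentially the same route as the paper: both arguments pin the flow on the unique edge entering $z$ to exactly $k$ (the paper cites the full rank of all color classes, you make the equivalent counting with $|Z|=r$ and $|S|=rk$ explicit), so that edge is a highway edge rather than an edge of $\cT$, leaving $z$ isolated in the forest.
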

\begin{proof}
By assumption, the sink $z$ has in-degree 1 in $D$ and all color classes $\cC$ have full rank. Hence, $k$ units of flow are entering $z$ through a unique edge.
\end{proof}

As our partition reduction algorithm partitions each tree $T \in \cT$
independently, it will be notationally more convenient
to fix an arbitrary
tree $T \in \cT$, and make some definitions relative to this fixed 
$T$, and make some 
observations that must hold for any such $T$. 
To a large extent these observations are intended to
show that the 
Figure \ref{fig:backbone} is accurate.

\begin{definition}~
\label{definition:portals}
\begin{itemize} 
    \item
    Let $\tilde{S}$ be the collection of source portals in tree $T$.
      \item
    Let $\tilde{Z}$ be the collection of sink portals in tree $T$.
    \item
    A normal vertex is  a vertex that is none of a source, a sink,
    a source portal, nor a sink portal.
\end{itemize}
\end{definition}

\begin{definition}~
\begin{itemize}
    \item 
    A \textbf{feasible flow} in $T$ from a collection $S' \subset S$ 
    is a collection of paths, one path $p^s$ 
    for each $s \in S'$ and $k$ paths $p^{\tilde{s}}_1, \ldots, p^{\tilde s}_k$ for each source portal $\tilde{s} \in \tilde{S}$ 
    such that (1) $p^s$ is a simple path from $s$ to some
    sink portal, (2) each $p^{\tilde{s}}_i$ is a simple path 
    from $\tilde s$ to a sink portal, and
    (3) no vertex or edge in $T$ has more
    than $k$ such paths passing through it. 
      \item 
     A \textbf{feasible routing} in $T$ from a collection $S' \subset S$ 
    is a collection of paths, one path $p^s$ 
    for each $s \in S'$ and one path $p^{\tilde{s}}$ for each source portal $\tilde{s} \in \tilde{S}$ 
    such that (1) $p^s$ is a simple path from $s$ to some
    sink portal, (2) $p^{\tilde{s}}$ is a simple path 
    from $\tilde s$ to a sink portal, and
    (3) no vertex or edge in $T$ has more
    than one such paths passing through it. 
\end{itemize}
\end{definition}

This following observation holds for trees in $\cT$ initially and gives intuition for the structure of $\cT$.  We remark that this observation may not hold throughout the execution of our algorithm for all trees. 

\begin{observation} \label{obs:exactk}
The number of sources in $T$ is an integer multiple of $k$.
\end{observation}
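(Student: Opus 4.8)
The plan is to use a counting argument based on flow conservation within the tree $T$. First I would classify the edges incident to the vertices of $T$ according to the four vertex types identified in Definition~\ref{definition:portals}: sources (outflow $1$ in $f$, no inflow), source portals (inflow $k$ along an incoming highway, outflow $k$ distributed among tree edges and possibly outgoing highways), sink portals (outflow $k$ along an outgoing highway, inflow $k$ distributed among tree edges and possibly incoming highways), and normal vertices (inflow equals outflow, all carried by tree edges or highway edges). Since every edge of $D$ carrying flow strictly between $0$ and $k$ lies in some tree of $\cT$, the only flow entering or leaving $T$ from "outside" travels along highway edges (flow exactly $k$) or originates at a source inside $T$ (flow exactly $1$).

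Next I would sum the flow balance over all vertices of $T$. Every tree edge $(u,v)$ with both endpoints in $T$ contributes its flow value once as outflow at $u$ and once as inflow at $v$, so it cancels in the global balance. What remains is: total flow injected into $T$ equals (number of sources in $T$) $\cdot\, 1$ plus $k$ times the number of highway edges entering $T$; total flow extracted from $T$ equals $k$ times the number of highway edges leaving $T$. By conservation these are equal, so $|S \cap T| = k\,(\#\text{highways leaving } T - \#\text{highways entering } T)$, which is an integer multiple of $k$. Here I am using Observation~2 (that each sink $z \in Z$ forms its own singleton tree) to be sure the trees under consideration contain no sinks, so that there is no flow absorbed inside $T$ other than what leaves along highways.

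The main obstacle I anticipate is bookkeeping the highway edges carefully at source portals and sink portals: a source portal has a highway directed \emph{into} it but may additionally lie on the interior of another highway or have tree edges leaving it, and symmetrically for sink portals, so I must make sure each highway edge incident to $T$ is counted exactly once and with the correct sign, and that no highway edge has both endpoints inside $T$ (which would make it cancel like a tree edge — and that is fine, it just does not contribute). Once the edge classification is pinned down, the conservation computation is a one-line sum. I would also remark, as the paper does, that this exact-multiple-of-$k$ property is a statement about the trees \emph{before} the recursive algorithm modifies them; the proof only uses the structure of the initial acyclic flow $f$.
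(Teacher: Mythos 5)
Your proof is correct and is essentially the paper's own argument: the paper likewise observes that all flow crossing the boundary of $T$ does so in multiples of $k$ (entering via source portals, leaving via sink portals), so by conservation the unit injections at the sources must sum to a multiple of $k$. You have simply written out the edge-by-edge bookkeeping that the paper leaves implicit, and your side remarks (sinks live in singleton trees; a highway edge with both endpoints in $T$ cancels harmlessly) are accurate.
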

\begin{proof}
This follows from the fact that each source portal  $\tilde{s} \in T$ has exactly $k$ units of flow coming into $T$ via
$\tilde{s}$  in the flow
$f$
and each sink portal $\tilde{z} \in T$ has exactly $k$ units of flow leaving $T$ via $\tilde{z}$ in  $f$. 
\end{proof}

\begin{definition}~
\begin{itemize}
\item For two vertices $u,v \in T$, let $P(u,v)$ be the unique undirected path from $u$ to $v$ in $T$. 
    \item
    The \textbf{backbone} $B$ of $T$ is the subgraph of $T$ consisting of the union of all paths in between pairs of sink portals  in $T$,
    that is $B = \bigcup_{\tilde{y} \in \tilde{Z}} \bigcup_{ \tilde{z}  \in \tilde{Z}} 
    P(\tilde{y}, \tilde{z})$. 
      \item
    For the backbone $B$,  let $B_v$ be the induced forest that results from
     deleting $v$ from $B$. 
    \item
 A vertex $v$ in a backbone $B$ is a \textbf{branching vertex} if either:
 \begin{itemize}
     \item $v$ is not a sink portal and the forest $B_v$ contains at least two trees that each contain exactly one sink portal, or
     \item
     $v$ is a sink portal and the forest $B_v$ contains 
      at least one tree that contains exactly one sink portal.
     \end{itemize}
     \item
     Let  $\cH$ be the forest   that results from deleting 
the edges in $B$ from the tree $T$.
\item 
For two vertices $u, v \in B$, let $S(P(u, v))$ be the
sources $s \in S$ such that there exists a tree
$H \in \cH$ such that $s \in H$ and such that $H$ contains
a vertex $w \in P(u, v)$. Intuitively these are the sources
in trees in $\cH$ hanging off vertices of the path $P(u, v)$. 
\item Let $S(v)$ denote $S(P(v,v))$. 
\end{itemize}
\end{definition}

\begin{figure}[t] 
 \centering
     \includegraphics[width=0.8\textwidth]{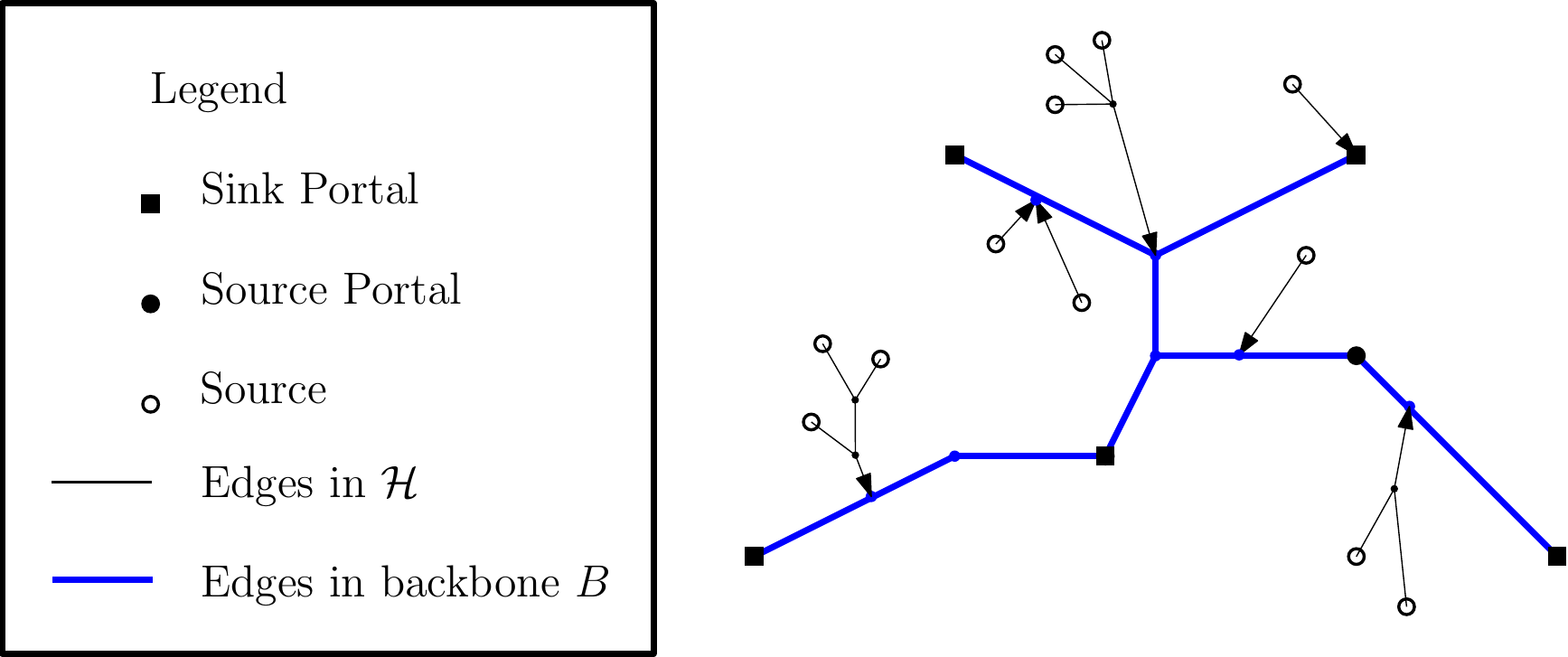}
     \caption{Backbone of a tree.}
     \label{fig:backbone}
\end{figure}

\begin{observation}\label{lem:sourcep}
If $\tilde{s} \in \tilde{S}$ is  a source portal in $T$
then $\tilde{s}$ is in the backbone $B$ and $\deg^+_B(\tilde{s})\geq 2$,
that is $\tilde{s}$ has out-degree at least 2 in $B$. 
\end{observation}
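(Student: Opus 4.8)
\textbf{Proof plan for Observation \ref{lem:sourcep}.}

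The plan is to establish the two assertions separately, using the local degree structure imposed by the normalization assumptions on $D$ together with the definition of the backbone. First I would recall that a source portal $\tilde s$ has in-degree $1$ in $D$ and carries exactly $k$ units of flow; since every vertex of $D$ has out-degree $1$ or in-degree $1$, and $\tilde s$ already has in-degree $1$ coming from a highway (an edge with flow $k$), the edges of $T$ incident to $\tilde s$ must all be \emph{outgoing}. Because the flow is conserved at $\tilde s$ and $k$ units enter via the highway, exactly $k$ units leave $\tilde s$ along edges of $T$, each such edge carrying flow strictly between $0$ and $k$; hence there are at least two of them, giving $\deg^+_T(\tilde s)\ge 2$ and $\deg^-_T(\tilde s)=0$.

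Next I would argue $\tilde s \in B$ and that in fact $\deg^+_B(\tilde s)\ge 2$. The key point is that every unit of flow leaving $\tilde s$ inside $T$ must eventually exit $T$ through some sink portal, since flow is only absorbed at sinks, and a sink that belongs to a nontrivial position in $T$'s flow must leave $T$ via a sink portal (the only vertices of $T$ through which flow departs $T$). Consider the two (or more) out-edges of $\tilde s$ in $T$, leading into distinct subtrees of $T_{\tilde s}$. I claim each such subtree contains a sink portal: the positive flow pushed into that subtree along its connecting edge cannot return to $\tilde s$ (that edge is the unique link, and $T$ is a tree, so there is no alternative undirected path that would let flow cross back without reusing the edge — and more carefully, the flow restricted to the subtree has a net outflow that must be realized at a sink portal inside it). Therefore at least two distinct subtrees of $T_{\tilde s}$ each contain a sink portal, so $\tilde s$ lies on a path $P(\tilde y,\tilde z)$ between two sink portals in different components of $T_{\tilde s}$, which places $\tilde s$ on the backbone $B$ and furnishes two distinct backbone edges out of $\tilde s$; combined with $\deg^-_T(\tilde s)=0$ these are out-edges, so $\deg^+_B(\tilde s)\ge 2$.

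The main obstacle I anticipate is making rigorous the claim that ``positive flow into a subtree of $T_{\tilde s}$ forces a sink portal in that subtree.'' This requires a careful conservation/accounting argument: one restricts $f$ to the edges of a component $T'$ of $T_{\tilde s}$, observes that the only way the net flow imbalance (strictly positive, since a positive amount enters $T'$ from $\tilde s$ and sources only add to it) can be balanced is by flow leaving $T'$, and then identifies that the departure of flow from $T'$ within $D$ happens exactly at a vertex of $T'$ that is a sink portal — here one needs that edges leaving $T'$ carrying flow are either highways (whose endpoint in $T'$ is a sink portal by definition) or would create flow strictly between $0$ and $k$ leaving $T'$, contradicting that $T$ is a maximal such tree. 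I would handle this by a short lemma on flow balance across the cut $\{T'\} \mid \{V \setminus T'\}$, invoking the acyclicity of $f$ and the structure of $\cT$ and the highways established in the preceding Observation. Once that is in place, the rest of the argument is bookkeeping with the normalization assumptions.
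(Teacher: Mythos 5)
Your proposal is correct and follows essentially the same route as the paper's proof: flow conservation at $\tilde s$ (with its unique, saturated incoming highway edge) forces at least two outgoing unsaturated edges in $T$, and a further conservation argument sends positive flow from each of these into distinct components of $T_{\tilde s}$, each of which must then contain a sink portal, placing $\tilde s$ on the backbone with $\deg^+_B(\tilde s)\ge 2$. Your version is simply a more detailed elaboration (in particular of the cut/flow-balance step) of the paper's terse argument, and the extra care is sound.
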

\begin{proof}
 By definition $\tilde{s}$ has a unique incoming edge,
 which is saturated in $f$, at least one outgoing
 edge in $T$ that is not saturated in $f$.
 Hence, $\deg_B^+(\tilde{s})\geq 2$. By flow conservation, there has to be at least two directed paths from $\tilde{s}$ to two different sink  portals in $T$. This implies that $\tilde{s}$ is in the backbone $B$.
\end{proof}

\begin{observation}
\label{obs:existsbranching}
If $B$ contains at least two sink portals, then $B$ contains
a branching vertex $v$. 
\end{observation}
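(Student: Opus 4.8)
The plan is to argue purely combinatorially on the backbone $B$. First I would record the structural facts needed. Since $T$ is a tree, the union of the $T$-paths between all pairs of vertices of $\tilde{Z}$ is precisely the minimal subtree of $T$ containing $\tilde{Z}$; in particular $B$ is itself a tree. Moreover, every leaf of $B$ lies in $\tilde{Z}$: if $v$ is a leaf of $B$ then $\deg_B(v)=1$, so $v$ cannot be an interior vertex of any path $P(\tilde{y},\tilde{z})$ (an interior vertex would have $B$-degree at least $2$), hence $v$ is an endpoint of some such path and therefore a sink portal.

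Next I would root $B$ at an arbitrary vertex $\rho$ and, for each $v\in B$, let $c(v)$ be the number of sink portals lying in the subtree of $B$ rooted at $v$ (including $v$ itself when $v\in\tilde{Z}$). The hypothesis that $B$ has at least two sink portals says $c(\rho)=|\tilde{Z}|\ge 2$. Also $c(u)\ge 1$ for every non-root vertex $u$, since the subtree rooted at $u$ has a leaf, that leaf has $B$-degree $1$ (its only neighbour being its parent), so it is a leaf of $B$ and hence a sink portal. Now pick $v$ to be a vertex of maximum distance from $\rho$ among all vertices with $c(v)\ge 2$; this set is non-empty since it contains $\rho$. By maximality of the distance, every child $u$ of $v$ has $c(u)\le 1$, and together with $c(u)\ge 1$ this gives $c(u)=1$ for every child of $v$.

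Finally I would split on whether $v$ is a sink portal. Deleting $v$ from $B$ isolates, for each child $u$ of $v$, the subtree rooted at $u$ as one of the trees of the forest $B_v$, and that tree contains exactly $c(u)=1$ sink portal. If $v\in\tilde{Z}$, then $c(v)=1+(\text{number of children of }v)\ge 2$ forces $v$ to have at least one child, so $B_v$ has a tree with exactly one sink portal and $v$ is a branching vertex by the sink-portal clause of the definition. If $v\notin\tilde{Z}$, then $c(v)=(\text{number of children of }v)\ge 2$, so $v$ has at least two children $u_1,u_2$, and the subtrees rooted at $u_1$ and $u_2$ are two distinct trees of $B_v$, each containing exactly one sink portal, so $v$ is a branching vertex by the non-sink-portal clause. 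The case $v=\rho$ needs no separate treatment.

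The only delicate point — and the step I would be most careful about — is the claim that $c(u)\ge 1$ for every non-root vertex; this is exactly where the specific definition of the backbone (a union of sink-portal-to-sink-portal paths, so that every leaf of $B$ is a sink portal) is used, rather than any flow property of $f$. Everything else is routine rooted-tree bookkeeping, and I would just double-check the small boundary cases, e.g. $|\tilde{Z}|=2$ with $B$ a single edge, and $v=\rho$.
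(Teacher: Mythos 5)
Your proof is correct, and it is essentially the paper's argument: the paper descends recursively into a component of $B_v$ still containing two sink portals, while you formalize the same descent as an extremal choice (the deepest vertex $v$ with at least two sink portals in its rooted subtree) and then read off the branching property from $c(u)=1$ at every child. Your version is a more careful rendering — in particular it makes explicit the termination of the descent and the fact, used only implicitly in the paper, that every leaf of $B$ is a sink portal — but no new idea is involved.
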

\begin{proof}
Consider an arbitrary vertex $v \in B$. If $v$ is not a branching
vertex, then there must be a subtree $T' \in B_v$ that contains
two sink portals. One can then recurse on $T'$ to find
a branching vertex. 
\end{proof}

\begin{observation}\label{lem:sourcepbackbone}
If $s \in S$ is  a source  in $T$
then $s$ is not in the backbone $B$. 
\end{observation}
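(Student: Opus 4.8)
The plan is to show that a source $s \in S$ cannot lie on the backbone $B$ of $T$ by arguing that membership in $B$ would force $s$ to have out-degree at least $2$ in $B$ (hence in $T$), contradicting the structural assumptions we have placed on sources. First I would recall that $B$ is the union of the undirected paths $P(\tilde y, \tilde z)$ between pairs of sink portals in $T$, and that, by our normalization in the Preliminaries, every source $s$ has in-degree $0$ and out-degree $1$ in $D$, and in particular out-degree at most $1$ within the subtree $T$. So it suffices to show that any vertex of $B$ that is not itself a sink portal must have degree at least $2$ in $B$ with at least one incoming and—crucially—at least two outgoing edges, or more precisely that an interior vertex of a path between two sink portals has both an in-edge and an out-edge in $T$; combined with the fact that the edges of $B$ carry flow strictly between $0$ and $k$ (they lie in $\cT$), this rules out $s$ being such an interior vertex. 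The only alternative is that $s$ is an endpoint of some path $P(\tilde y,\tilde z)$, i.e. $s$ is a sink portal; but a sink portal has $k$ units of flow through it while a source has out-flow exactly $1$ (and $k \ge 2$, since if $k=1$ the gammoid is trivially already a partition matroid and the statement is vacuous, or one checks the $k=1$ case directly), so $s$ cannot be a sink portal either.

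More concretely, the key steps in order are: (1) observe $s$ has out-degree exactly $1$ and in-degree $0$ in $D$, so the same bounds hold for its degrees within $T$; (2) suppose for contradiction $s \in B$; (3) since $s$ lies on some $P(\tilde y, \tilde z)$ with $\tilde y, \tilde z$ sink portals, consider the position of $s$ on this undirected path. If $s$ is an internal vertex of the path, then $s$ has two distinct neighbors on the path, each joined to $s$ by an edge of $T$ carrying flow in $(0,k)$; since $s$ has in-degree $0$ in $D$, both these edges must be directed out of $s$, giving out-degree $\ge 2$, a contradiction. (4) If $s$ is an endpoint of every such path it lies on—i.e., $s$ equals $\tilde y$ or $\tilde z$—then $s$ is a sink portal, so $k$ units of flow pass through it; but a source emits only its own unit of flow and receives none, so at most $1 \le k-1 < k$ units pass through $s$ when $k \ge 2$, again a contradiction. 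Hence $s \notin B$.

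I do not expect a serious obstacle here; the statement is essentially bookkeeping against the normalization assumptions, mirroring the proof of Observation~\ref{lem:sourcep} for source portals. The one point that needs a little care is the boundary case where $s$ could coincide with a source portal $\tilde s$ (which, by Observation~\ref{lem:sourcep}, does lie on $B$): but our normalization gives each source in-degree $0$, whereas a source portal has in-degree $1$ with a saturated incoming edge, so a source is never a source portal, and this case does not arise. A second minor subtlety is making sure the definition of $B$ as a union over \emph{pairs} of sink portals is handled when $T$ contains fewer than two sink portals—in that degenerate case $B$ is empty (or a single vertex) and the claim is immediate—so I would dispatch that case first and then assume $|\tilde Z| \ge 2$ in the main argument. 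With those two remarks in place the argument is complete.
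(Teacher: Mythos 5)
Your proof is correct and follows essentially the same route as the paper's: the paper's one-line argument ("$s$ has out-degree 1 in $D$, so it cannot lie on any path between sink portals") is exactly your degree-counting step, with your explicit treatment of the in-degree-0 normalization, the endpoint/sink-portal case, and the degenerate $k=1$ situation simply spelling out what the paper leaves implicit.
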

\begin{proof}
As $s$ has out-degree 1 in $D$, it can not be on any path
between sink portals in $T$. 
\end{proof}

\begin{observation} \label{lem:souresonback}
 For each tree
$H \in \cH$ it must be the case that all edges in $H$ are directed
towards the unique vertex $w$ in $H$ that is also in $B$. 
\end{observation}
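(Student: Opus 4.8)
The plan is to argue by contradiction using the degree constraints we imposed on $D$, namely that every vertex has out-degree $1$ or in-degree $1$, together with the definition of the backbone $B$ and the fact that the trees $H \in \cH$ carry flow strictly between $0$ and $k$. First I would fix a tree $H \in \cH$. Since $H$ is obtained by deleting the backbone edges from $T$, and $T$ is a tree in the forest $\cT$ that is connected to the rest of $D$ only through sink portals (an initial tree $T$ is a maximal connected component of the flow forest), the tree $H$ is attached to $B$ at a single vertex $w$ — I would first justify that there is exactly one such $w$, which follows because if $H$ met $B$ in two vertices then, since both $H$ and $B$ are subgraphs of the tree $T$, there would be a cycle in $T$.

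Next I would orient the edges. Every edge of $H$ carries positive flow strictly below $k$, so in particular every edge of $H$ is a genuine directed edge of $D$ carrying flow. Consider any vertex $u \in H$ with $u \neq w$. I claim all edges of $H$ incident to $u$ are directed "towards $w$", i.e. consistently up the tree $H$ rooted at $w$. Suppose not: then either some internal vertex $u$ of $H$ has two outgoing edges inside $T$ (one of them in $H$), or some vertex has two incoming edges inside $T$. The key observation is that a vertex $v$ with out-degree $\geq 2$ in $D$ must have in-degree $1$, hence both its outgoing edges carry positive flow only if... — more cleanly: if $v$ has out-degree $\geq 2$ in $D$ then it has in-degree exactly $1$; the unique incoming edge must then carry at least as much flow as the sum over its outgoing flow-carrying edges by flow conservation, and if $v$ has two outgoing edges in $T$ each carrying flow in $[1,k-1]$ then the incoming flow is at least $2$, fine, but then $v$ would lie on a path between two sink portals reachable through those two branches, forcing $v \in B$, contradiction with $v \in H \setminus \{w\}$ (or making $v$ the attachment vertex $w$). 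Symmetrically, a vertex with in-degree $\geq 2$ in $D$ has out-degree $1$; if it had two incoming flow edges in $T$, tracing the flow backwards from each leads to two source portals or sources, and again one shows the vertex must be on the backbone or be $w$. So every non-$w$ vertex of $H$ has, within $T$, in-degree $1$ or out-degree $1$ with the flow edges oriented toward $w$.

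The cleanest way to package this is: root $H$ at $w$, and show by induction on distance from $w$ (working from the leaves inward, or from $w$ outward) that each edge of $H$ points toward its endpoint closer to $w$. A leaf of $H$ other than $w$ that is a source has out-degree $1$, so its single edge leaves it, pointing into $H$ toward $w$; a leaf that is a normal vertex can't have degree $1$ in $D$ with only one edge unless that edge... — here I would just use the structural facts already proved (Observations~\ref{lem:sourcep}, \ref{lem:sourcepbackbone}) that source portals and sinks lie on $B$, so the only vertices of $H \setminus\{w\}$ are sources and normal vertices, and sources have out-degree $1$ while normal vertices cannot be a "split point" in the flow without landing on $B$.

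The main obstacle I expect is the bookkeeping in the middle step: precisely ruling out a vertex $u \in H \setminus \{w\}$ that acts as a branching point of the flow (two flow-carrying outgoing edges, or two incoming) by showing such a $u$ would necessarily lie on a path between two sink portals and hence belong to $B$, not $H$. This requires carefully using flow conservation at $u$ together with the fact that all of $u$'s flow ultimately terminates at sink portals of $T$, so distinct outgoing flow branches reach distinct sink portals (or, if they reach the same one, one finds an undirected cycle, contradicting that $T$ is a tree). Once that is pinned down, the orientation claim for every edge of $H$ follows immediately, which is exactly the statement.
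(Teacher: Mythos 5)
Your plan circles the right ideas (flow conservation, the tree structure of $H$, the absence of sink portals off the backbone), but the contradiction you set up does not actually cover the failure mode that matters, and the step where it does matter is exactly where your write-up trails off. You reduce ``some edge of $H$ is misoriented'' to ``some vertex of $H$ has two outgoing or two incoming edges in $T$.'' That reduction is false: $H$ could be a bare path hanging off $w$ with one edge pointing away from $w$, in which case there is no branching vertex anywhere --- the flow carried by that edge simply runs down the tree until it reaches a leaf of $H$, where it has nowhere to go. Your branching-vertex argument (such a vertex would lie on a path between two sink portals and hence in $B$) is correct but addresses a secondary configuration; the essential case is the dead-end one, which you acknowledge (``a leaf that is a normal vertex can't have degree $1$ in $D$ with only one edge unless that edge\dots'') and then never resolve. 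The observations you propose to cite there (that source portals lie on $B$ and sources do not) do not say why a normal vertex of $H\setminus\{w\}$ cannot absorb flow.

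The paper's entire proof is the one fact you are missing: $H\setminus\{w\}$ contains no sink portal (and no sink, since sinks form singleton trees), so flow conservation forbids any unit of flow from terminating inside $H$. Every edge of $H$ carries flow in $[1,k-1]$ by the definition of $\cT$, and the only exit from $H$ into the rest of $T$ is through $w$; in a tree this forces every edge to be directed toward $w$. (One also needs that flow cannot leak out of a vertex of $H\setminus\{w\}$ along a saturated edge not in $T$; a short conservation count --- an outgoing flow-$k$ edge would require inflow $k$, but the incoming $T$-edge carries strictly less than $k$ and any other incoming flow comes in multiples of $k$ --- rules this out.) If you restructure your argument around ``no flow can terminate in $H\setminus\{w\}$'' rather than around branching vertices, the statement follows in a few lines and the leaf case disappears as a separate concern.
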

\begin{proof} 
This follows from the 
fact that $H \setminus \lbrace w \rbrace$ can not contain a sink portal. 
\end{proof}

\begin{observation}
\label{lem:structuralproperties}
Assume that $T$ has at least two sink portals.  
Let $v$ be a branching vertex. Let $T'$ be a tree in the
forest $B_v$ that contains exactly one sink portal $\tilde z$.
Then the following must hold:
\begin{itemize}
\item $T' = P(v, \tilde{z}) \setminus \{v\}$. 
\item If $T'$ contains a source portal $\tilde{s}$, 
 then $\deg^+_B(\tilde{s})=2$.
\item The path $T'$ contains at most one 
vertex $y$ such that $\deg^+_B(y)=2$. 
\end{itemize}
\end{observation}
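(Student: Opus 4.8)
The claim is a purely structural statement about a single tree $T\in\cT$ with its backbone $B$, a branching vertex $v$, and a component $T'$ of $B_v$ containing exactly one sink portal $\tilde z$. I would prove the three bullets in the order listed, using the definitions of backbone, branching vertex, and the already-established observations (especially Observations~\ref{lem:sourcep}, \ref{lem:sourcepbackbone}, \ref{lem:souresonback}, and the fact that the flow $f$ is acyclic so $T$ really is a tree).

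\emph{First bullet: $T'=P(v,\tilde z)\setminus\{v\}$.} The inclusion $P(v,\tilde z)\setminus\{v\}\subseteq T'$ is immediate: $P(v,\tilde z)$ is the unique path in $T$ joining $v$ to $\tilde z$, it does not revisit $v$ after leaving it, and removing $v$ leaves it connected inside the component $T'$. For the reverse inclusion, suppose some vertex $w\in T'$ is not on $P(v,\tilde z)$. Since $T'\subseteq B$, the vertex $w$ lies on some path $P(\tilde y_1,\tilde y_2)$ between two sink portals of $T$. At least one of $\tilde y_1,\tilde y_2$ must lie in $T'$ as well (otherwise the path would have to pass through $v$ to reach $w$, but then $w$ would be separated from $\tilde z$ in $B_v$ only if... — more precisely, any sink portal outside $T'$ is separated from $w$ by $v$ in $B_v$, so the $B$-path from $w$ to it runs through $v$, forcing $w\in P(v,\cdot)$, hence $w$ on a $B$-path to a portal in $T'$). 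But $T'$ contains only the one sink portal $\tilde z$, so $w\in P(\tilde z,\tilde y)$ for some portal $\tilde y\notin T'$, which by the same reasoning passes through $v$; thus $w\in P(v,\tilde z)$, a contradiction. Hence $T'=P(v,\tilde z)\setminus\{v\}$, and in particular $T'$ is a path.

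\emph{Second bullet: a source portal $\tilde s$ on $T'$ has $\deg^+_B(\tilde s)=2$.} By Observation~\ref{lem:sourcep} every source portal lies on $B$ and has $\deg^+_B(\tilde s)\ge 2$. To get equality, note $T'$ is the path $P(v,\tilde z)\setminus\{v\}$; the only out-edges of $\tilde s$ that can lie in $B$ are those on $B$-paths between sink portals, and since every sink portal other than $\tilde z$ is separated from $\tilde s$ by $v$, every such path through $\tilde s$ either goes toward $\tilde z$ or toward $v$. That gives exactly two $B$-directions out of $\tilde s$; because $\tilde s$ has in-degree $1$ in $D$ and all its out-neighbors receive flow, $\deg^+_B(\tilde s)=2$. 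One must be slightly careful that the edge toward $v$ is genuinely directed out of $\tilde s$ — this is where I would invoke that a source portal has a single saturated in-edge and all remaining incident tree-edges out-directed, so both the $v$-side edge and the $\tilde z$-side edge point away from $\tilde s$.

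\emph{Third bullet: at most one vertex $y\in T'$ has $\deg^+_B(y)=2$.} Here the idea is a counting/parity argument along the path $T'=P(v,\tilde z)\setminus\{v\}$: orient attention to how out-degrees in $B$ behave as we walk from the $v$-end to $\tilde z$. A vertex of out-degree $2$ in $B$ on this path is a ``splitting'' point, but since $T'$ is a path ending at the unique sink portal $\tilde z$ with no further sink portals, any such split would create a second sink-portal-free dead end inside $B$ contradicting either the definition of $B$ (every leaf of $B$ is a sink portal) or the uniqueness of $\tilde z$ in $T'$. More carefully: if $y\in T'$ had $\deg^+_B(y)=2$, one of its two out-edges leaves the path $P(v,\tilde z)$; following $B$ from there we must reach a sink portal (leaves of $B$ are sink portals), and that portal is in $T'$ and $\ne\tilde z$ unless the branch loops back — impossible in a tree — so it would be a second sink portal in $T'$. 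This shows \emph{no} vertex strictly inside the path can have $B$-out-degree $2$ unless it is the one vertex adjacent to $v$ where the edge toward $v$ accounts for the second out-direction; that endpoint is the unique allowed exception, giving ``at most one.''

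\textbf{Main obstacle.} The delicate point throughout is bookkeeping the \emph{directions} of edges in $B$ versus in $T$: $B$ is defined as a union of undirected paths, yet the bullets speak of $\deg^+_B$, so I need the orientation inherited from $D$ to be consistent, and I must repeatedly use that source portals have exactly one (saturated) in-edge while everything else on a sink-portal-to-sink-portal path is forced out-directed. Getting the third bullet airtight — ruling out two out-degree-$2$ vertices — is the part most likely to need a careful case analysis, essentially arguing that every ``extra'' out-edge of a vertex on the path $P(v,\tilde z)$ must terminate (following $B$) at a sink portal, and $T'$ only has room for one such portal; the single tolerated exception is the out-edge pointing back toward $v$.
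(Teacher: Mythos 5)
Your arguments for the first two bullets are correct and essentially the paper's: the first follows because any $w\in T'\subseteq B$ lies on a sink-portal-to-sink-portal path, and since $T'$ contains only $\tilde z$ and $v$ separates $T'$ from all other portals, that path forces $w\in P(v,\tilde z)$; the second follows because every vertex of the path other than $\tilde z$ has total $B$-degree exactly $2$, so $\deg^+_B(\tilde s)\le 2$, and Observation~\ref{lem:sourcep} gives the matching lower bound.

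Your argument for the third bullet, however, has a genuine gap. It rests on the claim that a vertex $y\in T'$ with $\deg^+_B(y)=2$ must have an out-edge ``leaving the path,'' which would lead to a second sink portal. That premise is false: since $T'$ is a component of $B_v$, every $B$-edge at $y$ lies on $P(v,\tilde z)$ (counting the edge to $v$), so a $\deg^+_B(y)=2$ vertex is simply one whose two \emph{path} edges are both directed away from it --- one toward $v$, one toward $\tilde z$. No new branch of $B$ and no extra sink portal is created. Indeed your conclusion --- that only the vertex adjacent to $v$ can have out-degree $2$ --- contradicts the second bullet you just proved, since a source portal with $\deg^+_B=2$ can sit anywhere in the interior of the path. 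The missing idea is a flow-conservation argument: if $y_1,y_2$ were two such vertices with $y_1$ closer to $v$, then the edge at $y_1$ pointing toward $y_2$ and the edge at $y_2$ pointing toward $y_1$ both direct positive flow into the segment between them; since every interior vertex of that segment has $B$-degree $2$, some vertex $u$ there has both incident edges directed into it and hence no outgoing $B$-edge carrying flow. As $u$ is neither a sink nor a sink portal ($\tilde z$ is the only one in $T'$), the flow entering $u$ cannot be routed onward, contradicting feasibility of $f$. This is exactly the paper's argument that ``the flow leaving $y_1$ toward $\tilde z$ could not be feasibly routed through $y_2$,'' and some such use of the flow is unavoidable here --- purely combinatorial reasoning about $B$ does not rule out two splitting vertices.
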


\begin{proof}
The first statement follows from the definition
of $B$ and the fact that $T'$ only contains one sink portal. 
The second statement follows since every vertex on a path
other than its endpoints has degree two. 
For the last statement assume to reach a contradiction that
there were two such $y$'s, $y_1$ and $y_2$ with $y_1$
being closer to $v$ in $B$. Then the flow leaving $y_1$
toward $\tilde z$ could not be feasibly routed through
$y_2$.
\end{proof}

\subsection{Description of the Partition Reduction Algorithm}

Given the collection of trees $\cT$, our Partition Reduction Algorithm  returns a partition $\cX$ of the sources in $S$. The algorithm 
iterates through the trees $T$ in $\cT$ and
partitions the sources in $T$ 
based on their locality in $T$. So let us consider
a particular tree $T \in \cT$.

The algorithm performs the first listed case
below that applies, with the base cases being checked
before the other cases. In the non-base cases 
the tree $T$ will be modified, and the algorithm 
called tail-recursively on the modified tree. 
 We will show after the algorithm description 
 that the algorithm maintains the invariant that
 there is a feasible flow on the tree $T$ throughout the recursion.

\medskip
 
\noindent \textbf{Base Case A:} If $T$ contains 
no sources then the recursion terminates,
and the algorithm moves to the next tree
in $\cT$. 
 
 \noindent \textbf{Base Case B:}
 Otherwise if $T$ contains at most $2k-2 $ sources and no source portal then
 these sources are added as a part $X$ in $\cX$.
 The recursion terminates, and the algorithm then moves to the next tree in $\cT$. 
 
\medskip

We perform the following recursively on $T$ if neither base case holds. Let $v$ be an arbitrary branching vertex in 
$B$. We will show this must exist in Observation~\ref{obs:brancingexistalg}.

Let $\tilde{z}_1$ be a sink portal
in some tree $T_1$ in $T_v$ that only contains
one sink portal.
If $v$ is not a sink portal, 
let $\tilde{z}_2$ be a sink portal
in some tree $T_2$, where $T_1 \ne T_2$, in $T_v$ that only contains
one sink portal.
If $v$ is a sink portal let $\tilde{z}_2 = v$.

The algorithm's cases are broken up as follows.  Case 1 is executed when there is a
source portal at $v$ or in $T_1$ or in $T_2$. 
Case 2 is executed when there is a vertex of out-degree 2 in $T_1$ or $T_2$ and there is no source portal.  Case 3 is everything else.

\medskip
 
\noindent \textbf{Recursive Case 1a:}  The vertex
$v$ is a source portal.
In this case  $T$ is modified as follows:
(1) for each source $s \in T_1$ a
directed edge $(s, v)$ is added to $T$, (2) $v$ is converted
into a normal vertex, and (3) all the nonsources in $T_1$ 
are deleted from $T$. 
The algorithm then recurses on this new $T$.

\noindent \textbf{Recursive Case 1b:}  In this case for
some $i \in \lbrace 1, 2\rbrace$ the path
$P(v, \tilde{z_i})\setminus \lbrace v \rbrace$  contains a source portal.
 In this case
$T$ is modified as follows:
(1) for each source $s \in T_i$ a
directed edge $(s, v)$ is added to $T$, 
 and (2) all the nonsources in $T_i$ 
are deleted from $T$. 
The algorithm then recurses on this new $T$.

\begin{figure}[H] 
 \centering
     \includegraphics[width=0.9\textwidth]{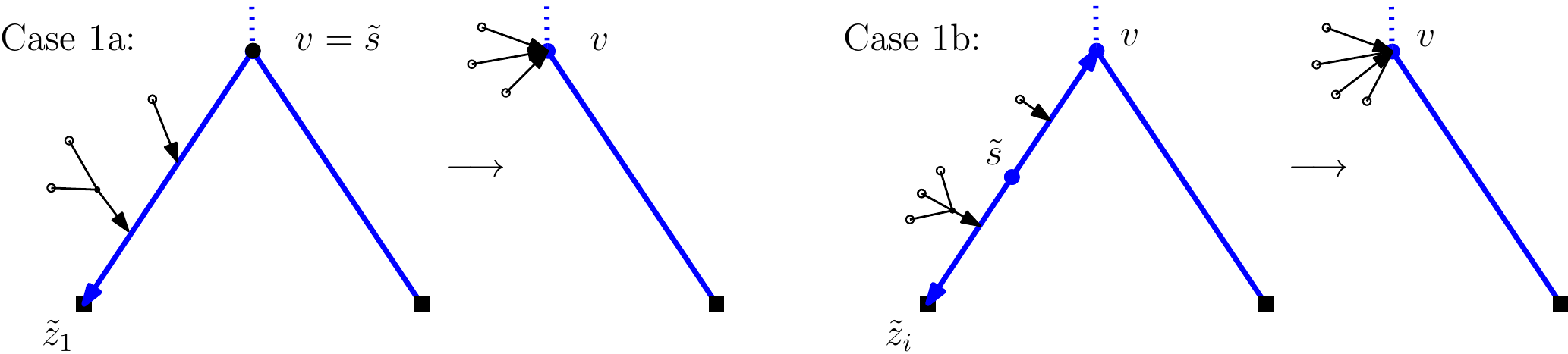}
\end{figure}

\noindent \textbf{Recursive Case 2a:} In this case for some $i \in \lbrace 1, 2\rbrace$, the path $P(v,\skp_i)\setminus \lbrace v \rbrace$ contains a vertex $y$ with $\deg^+_B(y)=2$ and $\vert S(P(y,\skp_i)) \vert \leq 2k-2$. Add the sources in $S(P(y,\skp_i)) $ as a part $X$ to $\cX$. 
The tree $T$ is then modified as follows: 
(1) for each source $s \in T_i - X$ a
directed edge $(s, v)$ is added to $T$, 
 and (2) the sources in $X$ and all the nonsources in $T_i$ 
are deleted from $T$. 
The algorithm then recurses on this new $T$.

\noindent\textbf{Recursive Case 2b:} In this case for some $i \in \lbrace 1, 2\rbrace$, the path $P(v,\skp_i)\setminus \lbrace v \rbrace$ contains a vertex $y$ with $\deg^+_B(y)=2$ and $\vert S(y) \vert = k$.
In this case the algorithm adds the $k$ sources in $S(y)$ as a part $X$ to $\cX$. 
The tree $T$ is then modified as follows: 
(1) for each source $s \in T_i - X$ a
directed edge $(s, v)$ is added to $T$, 
 and (2) the sources in $X$ and all the nonsources in $T_i$ 
are deleted from $T$. 
The algorithm then recurses on this new $T$.

\begin{figure}[H] 
 \centering
     \includegraphics[width=0.9\textwidth]{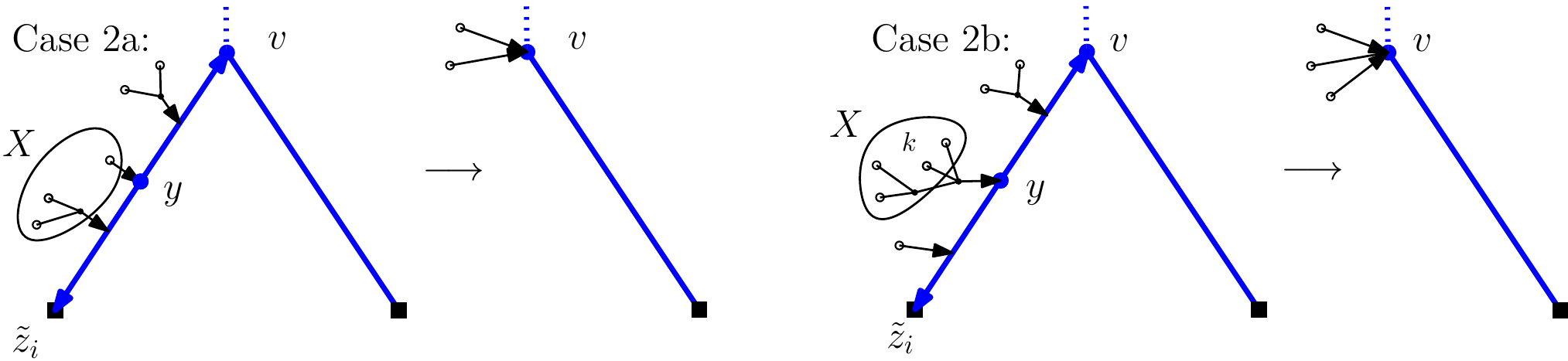}
\end{figure}

\noindent\textbf{Recursive Case 2c:} In this case for some $i \in \lbrace 1, 2\rbrace$, the path $P(v,\skp_i)\setminus \lbrace v \rbrace$ contains a vertex $y$ with $\deg^+_B(y)=2$ and $\vert S(P(y,\skp_i)\setminus \lbrace y \rbrace) \vert = k$.
In this case the algorithm adds the $k$ sources in $S(P(y,\skp_i)\setminus \lbrace y \rbrace)$ as a part $X$ to $\cX$. 
The tree $T$ is then modified as follows: 
(1) for each source $s \in T_i - X$ a
directed edge $(s, v)$ is added to $T$, 
 and (2) the sources in $X$ and all the nonsources in $T_i$ 
are deleted from $T$. 
The algorithm then recurses on this new $T$.

\begin{figure}[H] 
 \centering
     \includegraphics[width=0.43\textwidth]{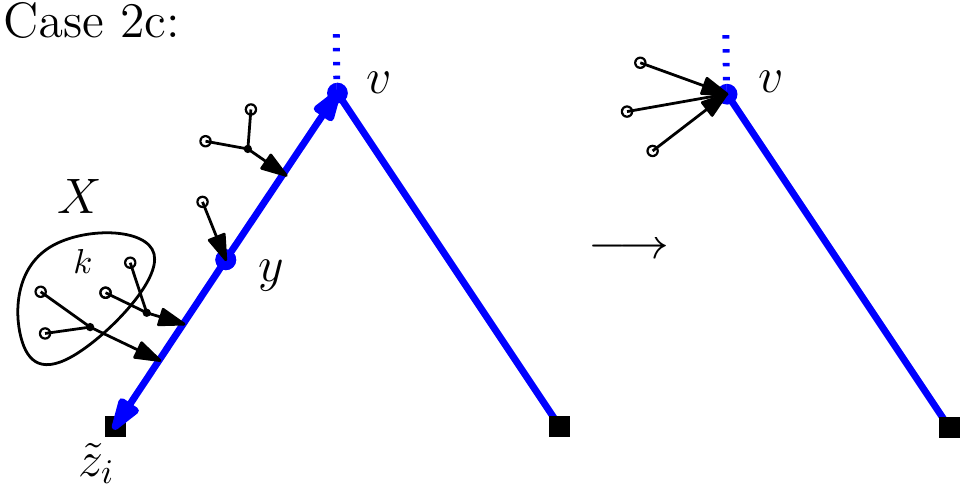}
\end{figure}

\noindent \textbf{Recursive Case 3a:} 
In this case for some $i \in \lbrace 1, 2\rbrace$ 
$T_i$ contains exactly $k$ sources. Add the sources
in $T_i$ as a part  $X$ to $\cX$. 
The tree $T$ is modified by deleting $T_i$. 
The algorithm then recurses on this new $T$.

\medskip
\noindent \textbf{Recursive Case 3b:} The set of sources in $T_1 \cup T_2$ are added as a part $X$ in $\cX$. The tree $T$ is modified by deleting the vertices in $T_1$
and $T_2$.  Add a new sink portal $\skp$ together with a directed edge $(v,\skp)$ to $T$. The algorithm then recurses on this new $T$. 

\begin{figure}[H] 
 \centering
     \includegraphics[width=0.9\textwidth]{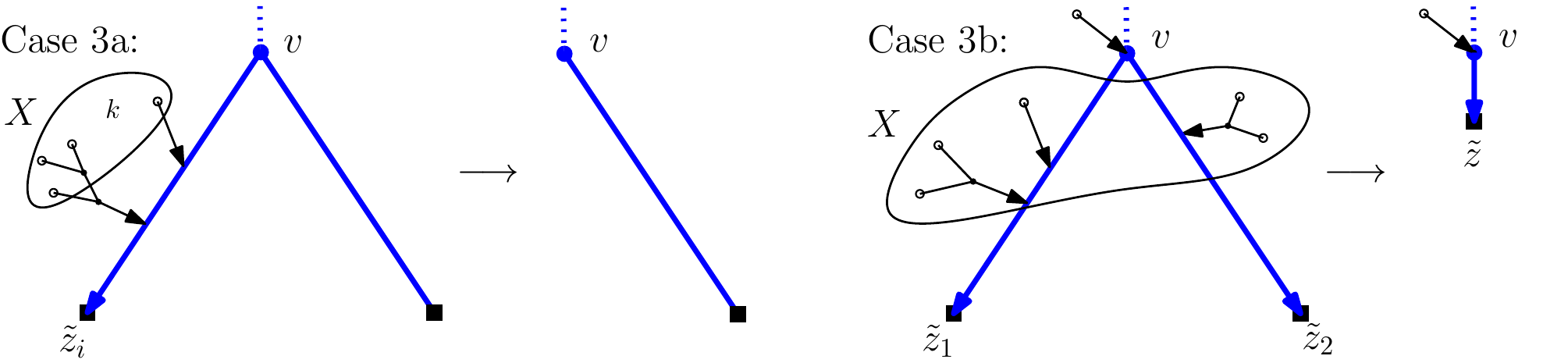}
\end{figure}

\section{Analysis of the Partition Reduction Algorithm}

Our goal is to show that the partition matroid, represented by the partition constructed from the trees, indeed corresponds to a feasible partition reduction from the gammoid. The analysis has the following key components.

\begin{itemize}
    \item Every tree $T$ has a corresponding feasible flow throughout the algorithm.
    \item Every part $X$ of the partition has size at most $2k-2$ and all sources are in some part.
    \item Any collection of sources $Y$ such that $|Y \cap X| \leq 1$ for all $X \in \cX$ is in $\cI$ and, therefore, can each route a unit of flow to the sink in $D$.
\end{itemize}

\subsection{The Trees Always Have a Feasible Flow}

This section's goal is to show that each tree has a feasible flow as defined in Definition  \ref{definition:portals} throughout the execution of the algorithm.   We will  later use this to prove that our partition indeed represents a partition matroid that corresponds to a feasible partition reduction in the following section.

We begin by  showing various invariants hold for each tree when a feasible flow exists.   In particular, this will show that a branching vertex exists if any of the recursive cases are executed. Moreover, arriving at Cases (3a) and (3b) ensure the existence of $T_1$ and $T_2$.  All together this with the fact that each tree has a feasible flow will establish that the algorithm always has a case to execute if $\cT$ is non-empty.    

This observation shows a branching vertex exists if neither base case holds. 

\begin{observation} \label{obs:brancingexistalg}
Fix a tree $T \in \cT$ during the execution of the algorithm and say $T$ supports a feasible flow as defined in Definition~\ref{definition:portals}.  If neither of the base cases apply then $T$ contains at least
two sink portals. Moreover a branching vertex must exist in $T$ in this case.
\end{observation}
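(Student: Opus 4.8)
The plan is to show two things: (i) if neither Base Case A nor Base Case B applies, then $T$ has at least two sink portals; and (ii) then conclude via Observation~\ref{obs:existsbranching} that a branching vertex exists. Step (ii) is immediate once (i) is in hand, since Observation~\ref{obs:existsbranching} states precisely that a backbone $B$ with at least two sink portals contains a branching vertex, and $B$ here is nonempty and well-defined. So the work is entirely in step (i).

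For step (i), I would argue by contradiction in two sub-cases according to how many sink portals $T$ has. First suppose $T$ has \emph{zero} sink portals. Consider the feasible flow guaranteed by hypothesis: each source $s \in T$ sends one unit along $p^s$ to a sink portal, and each source portal sends $k$ units to sink portals. But there are no sink portals, so there are no sources and no source portals in $T$; the only edges of $T$ carry flow strictly between $0$ and $k$, and with no sources or source portals feeding them, flow conservation forces all edge flows to be $0$, contradicting that $T \in \cT$ consists of edges with flow in $[1,k-1]$ (or, more simply: $T$ has no sources, so Base Case A applies, contradiction). Next suppose $T$ has \emph{exactly one} sink portal $\tilde z$. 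By Observation~\ref{lem:sourcep}, every source portal of $T$ lies on the backbone $B$ and has out-degree at least $2$ in $B$; but $B = \bigcup_{\tilde y, \tilde z' \in \tilde Z} P(\tilde y, \tilde z')$ collapses to the single vertex $\tilde z$ when $|\tilde Z| = 1$, which cannot contain a vertex of out-degree $2$, so $T$ has no source portal. Then, since Base Case B does not apply, $T$ must contain strictly more than $2k-2$ sources. Now route the feasible flow: every one of the $> 2k-2 \ge k$ units of flow from the sources must terminate at a sink portal, and the only sink portal is $\tilde z$, which by the feasibility condition can carry at most $k$ units through it. This is the desired contradiction. Hence $T$ has at least two sink portals.

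The main obstacle is making the single-sink-portal argument airtight: one must be careful that "feasible flow in $T$" (Definition before Observation~\ref{obs:exactk}) really does force every source's path to end at a sink portal of $T$ and that the capacity bound of $k$ at $\tilde z$ is the binding constraint. It is worth double-checking the edge case where $T$ has exactly $2k-2$ sources and no source portal but one sink portal — this is excluded by Base Case B, so it does not arise, but the boundary $2k-2$ versus $k$ (valid since $k \ge 2$, because a $1$-colorable gammoid is trivial) should be noted. Once the counting at $\tilde z$ is set up correctly, the contradiction and hence the existence of a second sink portal, and then a branching vertex, follow cleanly.
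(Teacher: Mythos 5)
Your proof is correct and the core of it is the same flow-counting argument the paper uses: strictly more than $k$ units of flow must reach sink portals, each sink portal absorbs at most $k$ units by feasibility, hence at least two sink portals, and then Observation~\ref{obs:existsbranching} supplies the branching vertex. The one place you diverge is the subcase where Base Case B fails because a source portal is present: the paper simply counts the $k$ units injected by the source portal plus at least one unit from a source (Base Case A failing guarantees a source exists), getting $>k$ units directly. You instead rule out source portals entirely when $|\sinkp|=1$ by invoking Observation~\ref{lem:sourcep} (source portals lie on $B$ with out-degree $\ge 2$) and noting $B$ degenerates to a single vertex. That works, but it leans on a structural observation that the paper states for the \emph{initial} trees in $\cT$; whether it persists for the modified trees arising mid-execution is not established in the paper, so the direct count ($k$ from the portal plus $1$ from a source) is both shorter and safer here. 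Your handling of the zero-sink-portal case and the remark that $k \ge 2$ is needed for $2k-2 \ge k$ are fine.
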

\begin{proof}
This observation holds because $T$ must contain either more than $2k-2$ sources or a source portal along with at least one source.  In either case, we require two sink portals to support the strictly more than $k$ units of flow from these sources and source portal. A branching vertex must then exist by Observation \ref{obs:existsbranching}.
\end{proof}

\begin{observation}\label{obs:vtoskp}
Fix a tree $T \in \cT$ during execution of the algorithm and say $T$ supports a feasible flow as defined in Definition~\ref{definition:portals}. Say that $T$ has a branching vertex $v$ with a tree $T_i$ containing exactly one sink $\skp_i$. Moreover say that there is no vertex with out-degree 2 in $P(v,\skp_i)$. It is the case that $P(v,\skp_i)$ is a directed path from $v$ to $\skp_i$.
\end{observation}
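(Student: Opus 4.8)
The plan is to argue directly from the structure of the backbone $B$ and the definition of a branching vertex. By Observation~\ref{lem:structuralproperties}, since $v$ is a branching vertex and $T_i$ is a tree in $B_v$ containing exactly one sink portal $\skp_i$, we have $T_i = P(v,\skp_i)\setminus\{v\}$, so $P(v,\skp_i)$ is already an undirected path in $T$; the only thing to prove is that every edge of this path is oriented consistently from $v$ toward $\skp_i$. First I would record that every internal vertex $y$ of $P(v,\skp_i)$ lies on the backbone and has $\deg^+_B(y)\le 1$ by hypothesis (no out-degree-2 vertex on the path), hence $\deg^+_B(y)=1$ and, since $y$ is on a path between two backbone vertices and has backbone-degree two, also $\deg^-_B(y)=1$. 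Combined with the assumption in the Preliminaries that every vertex of $D$ has out-degree $1$ or in-degree $1$, and that $y$ is a normal vertex (it is not a source, not a sink, not a source portal since $\deg^+_B$ would be $\ge 2$ by Observation~\ref{lem:sourcep}, and not a sink portal since $T_i$ contains only the one sink portal $\skp_i$), I would pin down that $y$ has exactly one incoming and one outgoing edge inside $T$ along $P(v,\skp_i)$.

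The second step is to propagate orientation along the path. Starting at $v$: at least one edge of $P(v,\skp_i)$ incident to $v$ must leave $v$, because $T_i$ carries strictly positive flow toward $\skp_i$ (this is exactly why $T_i$ was selected — it supports the flow that must exit via its sink portal), and that flow has to enter $T_i$ through the single edge connecting $v$ to $T_i$. Then I would walk along the path: each internal $y$ has its unique out-edge continuing the path (its other path-edge is the in-edge), so the orientation cannot "turn around," and we reach $\skp_i$, whose unique out-edge in $D$ is the highway leaving it, so the last path edge must point into $\skp_i$. Hence $P(v,\skp_i)$ is a directed path from $v$ to $\skp_i$.

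The main obstacle I expect is ruling out the degenerate possibilities at the endpoints and making the "at least one edge leaves $v$" claim airtight: one must be careful that $v$ itself could be a sink portal (in which case one should note that $\skp_i\ne v$ by the choice of $T_i$, and the relevant edge still leaves $v$ into $T_i$), and one must invoke the feasible-flow invariant to guarantee the flow on $T_i$ is genuinely positive so that some edge is forced to be oriented outward from $v$. A clean way to handle all of this uniformly is: the single edge $e$ joining $v$ to $T_i$ carries the flow that ultimately exits through $\skp_i$ (by flow conservation on $T_i$, whose only portal-type boundary vertices are $v$ via $e$ and $\skp_i$), so $e$ is oriented away from $v$; then the out-degree/in-degree-one structure of each subsequent vertex, already established in the first step, forces the orientation to march monotonically to $\skp_i$. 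I would also remark, as the excerpt's Observation~\ref{lem:souresonback} suggests, that any source hanging off $P(v,\skp_i)$ sits in a tree of $\cH$ directed toward the path and does not interfere with the orientation of the backbone path itself.
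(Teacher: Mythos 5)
Your overall skeleton (the path is an undirected path by Observation~\ref{lem:structuralproperties}, and the no-out-degree-2 hypothesis prevents the orientation from turning around) matches the paper's, but the step you use to anchor the orientation is unsound. You orient the first edge $e$ joining $v$ to $T_i$ away from $v$ by arguing that the flow exiting through $\skp_i$ must enter $T_i$ across $e$. That is not what flow conservation gives: the sources sitting in trees of $\cH$ that hang off $P(v,\skp_i)$ also inject flow into $T_i$, so $\skp_i$ can be fed entirely from inside $T_i$ and no flow need cross $e$ at all; moreover the feasible-flow invariant of Lemma~\ref{lem:feasibleflow} only asserts that \emph{some} feasible flow exists, not that it pushes positive flow across every edge of $T$ or even into $\skp_i$. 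So nothing in your argument forces $e$ to leave $v$. A second, related gap is the inference from $\deg^+_B(y)\le 1$ to $\deg^+_B(y)=1$ for internal vertices: a priori an internal vertex could have both of its path edges directed \emph{into} it ($\deg^+_B(y)=0$), which the no-out-degree-2 hypothesis does not exclude, and your forward walk from $v$ cannot get past such a vertex.

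Both gaps close if you anchor at the other endpoint, which is what the paper does: $\skp_i$ is a sink portal, hence has out-degree $0$ in $T$ (initially by the degree assumptions, and the algorithm never gives a sink portal an outgoing edge), so the last path edge is directed into $\skp_i$; walking backward, at each vertex the downstream path edge is already known to be an out-edge, so the upstream path edge must be an in-edge, since otherwise that vertex would have out-degree $2$ in $B$, contradicting the hypothesis. This reaches $v$ and orients every edge toward $\skp_i$ with no flow argument at all. You in fact state the needed fact about $\skp_i$ at the end of your walk, but you use it only as a consistency check rather than as the induction anchor; reversing the direction of your propagation is the fix.
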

\begin{proof}
No vertex with out-degree 2 is in $P(v,\skp_i)$.  Thus, $P(v,\skp_i)$ is either a path from $v$ to $\skp_i$ or from $\skp_i$ to $v$. Sink portals always have out-degree $0$ in $T$, so the observation follows. We note that, sink portals have out-degree 0 in $T$ initially and are never given outgoing edges by the algorithm. 
\end{proof}
 The next observation shows that a branching vertex is not a sink portal when Cases (3a) or (3b) are executed.

\begin{observation}  \label{obs:non-sink}Fix a tree $T \in \cT$ during execution of the algorithm and say $T$ supports a feasible flow as defined in Definition~\ref{definition:portals}. Say that $T$ has a branching vertex $v$ and a corresponding tree $T_i$ with exactly one sink $\skp_i$.  If $P(v,\skp_i)\setminus \lbrace v \rbrace$ does not contain a vertex of out-degree 2 in $B$, then $v$ is not a sink portal. 
\end{observation}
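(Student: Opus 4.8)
The plan is to argue by contradiction, supposing that $v$ is a sink portal while $P(v,\skp_i)\setminus\{v\}$ contains no vertex of out-degree $2$ in $B$. The first step is to recall from Observation~\ref{obs:vtoskp} (applied with the fact that there is no out-degree-$2$ vertex strictly between $v$ and $\skp_i$, and noting $v$ itself is a sink portal so it has out-degree $0$ in $T$) that the path $P(v,\skp_i)$ cannot actually be a directed path emanating from $v$ — because a sink portal has no outgoing edge in $T$, and the algorithm never adds one. So if $v$ is a sink portal, then $v$ has out-degree $0$ on this path, forcing $P(v,\skp_i)$ to be directed \emph{from} $\skp_i$ \emph{toward} $v$. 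But $\skp_i$ is also a sink portal, so it too has out-degree $0$ in $T$ and cannot be the tail of the first edge of that path either. Hence the path $P(v,\skp_i)$ would need both endpoints to have out-degree $0$ within $T$, which means some internal vertex must have out-degree $2$ on this path — contradicting the hypothesis.

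More carefully, the key structural fact I would invoke is that along any undirected path in $T$ whose two endpoints both have out-degree $0$ in $T$ (restricted to the path), there must be at least one internal vertex that is the common head-side branching point, i.e. a vertex with out-degree $2$ along the path; this is just the observation that a finite tree-path directed ``inward from both ends'' must have a sink of the orientation somewhere in the middle, and by flow conservation that middle vertex has two outgoing path-edges, so $\deg^+_B$ there is at least $2$ (it lies on $B$ since it is on $P(v,\skp_i)\subseteq B$). Combined with the fact that the only out-degree-$2$ vertices relevant here lie on $B$, this internal vertex contradicts the assumption that $P(v,\skp_i)\setminus\{v\}$ contains no such vertex. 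One subtlety: the internal out-degree-$2$ vertex might coincide with $v$ itself — but $v$ is a sink portal, hence has out-degree $0$, not $2$, so it is genuinely internal and in $P(v,\skp_i)\setminus\{v\}$, completing the contradiction.

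The main obstacle I anticipate is making the ``both endpoints have out-degree $0$, so there is an internal out-degree-$2$ vertex'' claim airtight using only what the excerpt has established. In particular I need that (i) along $P(v,\skp_i)$ the edges are oriented consistently with the flow $f$ (which they are, since $P(v,\skp_i)\subseteq B\subseteq T$ and $T$ carries the acyclic feasible flow, with each tree $H\in\cH$ pointing inward by Observation~\ref{lem:souresonback}), (ii) ``out-degree $2$ in $B$'' and ``out-degree $2$ along the path $P(v,\skp_i)$'' coincide for internal vertices of this path — which holds because $T_i$ contains only one sink portal, so by Observation~\ref{lem:structuralproperties} the subtree $T'=P(v,\skp_i)\setminus\{v\}$ is exactly a path and any out-degree-$2$-in-$B$ vertex on it has its two $B$-outgoing edges along that path. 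These are all consequences of earlier observations, so the write-up is essentially a careful orientation-counting argument; I would present it as: assume $v$ is a sink portal, deduce $v$ and $\skp_i$ both have out-degree $0$ in $T$, walk along $P(v,\skp_i)$ from each end following edge orientations, and conclude the two inward-walks must meet at an internal vertex of out-degree $2$ in $B$, contradicting the hypothesis.
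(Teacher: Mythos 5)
Your proof is correct and takes essentially the same approach as the paper: the paper invokes Observation~\ref{obs:vtoskp} (noting that if $v$ were a sink portal it would have out-degree $0$, so the whole path has no out-degree-$2$ vertex and must be directed from $v$ to $\skp_i$, contradicting out-degree $0$ at $v$), and your argument simply unpacks that same orientation-counting reasoning directly. Your explicit handling of the subtlety that the forced out-degree-$2$ vertex cannot be $v$ itself is a welcome clarification of a point the paper leaves implicit.
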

\begin{proof} Observation~\ref{obs:vtoskp} implies that $P(v,\skp_i)$ is a directed path form $v$ to $\skp_i$.    Sink portals always have out-degree $0$ in $T$, so the observation follows.  We note that, sink portals have out-degree 0 in $T$ initially and are never given outgoing edges by the algorithm.
\end{proof}

The previous observations guarantee the algorithm always has a case to execute if a feasible flow exists in all trees.  The next lemma guarantees the existence of a feasible flow.

\begin{lemma} Fix any tree $T$ during the execution of the algorithm.   There must be a feasible flow in $T$ as described in Definition~\ref{definition:portals}. \label{lem:feasibleflow}
\end{lemma}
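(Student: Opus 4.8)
The plan is to prove Lemma~\ref{lem:feasibleflow} by induction on the recursion of the Partition Reduction Algorithm. The base case is the original tree $T \in \cT$: here a feasible flow is obtained directly from the acyclic flow $f$ restricted to $T$, since by the Observation on acyclic flows every source in $T$ has outflow $1$, every source portal has $k$ units entering via its highway, and every sink portal has $k$ units leaving via its highway; decomposing the flow $f \mid T$ into paths gives one $s$-to-sink-portal path for each source $s$ and $k$ paths from each source portal to sink portals, with congestion at most $k$ on every vertex and edge. So the invariant holds initially. For the inductive step I would assume $T$ has a feasible flow and show that after each of the recursive cases the modified tree $T'$ still has one, by explicitly exhibiting how to reroute the paths of the old feasible flow.

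The core argument, carried out case by case, is a flow-surgery/accounting argument. First I would invoke Observation~\ref{obs:brancingexistalg}, \ref{obs:vtoskp}, and \ref{obs:non-sink} to guarantee the branching vertex $v$, the subtrees $T_1,T_2$, and (in Cases 2 and 3) that $v$ is not a sink portal, so the relevant quantities are well-defined. Then for each case I would track the change in ``supply'' and ``capacity'': deleting a sink portal $\skp_i$ from $T$ removes $k$ units of draining capacity, and the algorithm compensates by deleting sources (Base Case B, Cases 2a--2c, 3a, 3b) or by converting $v$ into a normal vertex and reconnecting sources directly to $v$ via new edges $(s,v)$ (Cases 1a, 1b). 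The key local facts I would use are from Observation~\ref{lem:structuralproperties}: $T_i = P(v,\skp_i)\setminus\{v\}$ is a directed path toward $\skp_i$ with at most one out-degree-$2$ vertex, source portals on it have $\deg^+_B = 2$, and trees in $\cH$ hang off it with all edges oriented toward the backbone. Using these, in Case 1a/1b the $k$ units that used to flow from $v$ (or from the source portal) down into $T_i$ toward $\skp_i$ are exactly replaced: the reconnected sources $s \in T_i$ now each send $1$ unit into $v$, and the old paths from $v$ through $T_i$ are deleted; one checks the count of reconnected sources matches the freed capacity using Observation~\ref{obs:exactk}-type parity. In Cases 2a--2c and 3a, the part $X$ removed has size exactly the flow that was being routed out through $\skp_i$ past the chosen vertex $y$ (namely $k$, or $\le 2k-2$ split appropriately), so deleting $X$ and $\skp_i$ together is flow-neutral, and the remaining sources of $T_i - X$ get reconnected to $v$ to absorb whatever capacity $v$ loses. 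In Case 3b, both $\skp_1,\skp_2$ are removed (losing $2k$ capacity) but a fresh sink portal $\skp$ with edge $(v,\skp)$ is added (restoring $k$), and the part $X = S(T_1\cup T_2)$ has size exactly $k$ by the case hypotheses ($T_1,T_2$ each have exactly... actually strictly more than $k$ combined is impossible here given earlier cases failed, so $|S(T_1 \cup T_2)| = k$), matching the net $k$ units lost.

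The main obstacle I expect is the bookkeeping in Cases 1a and 1b: after reconnecting sources $s\in T_i$ directly to $v$ with new edges, I need to argue that the old feasible flow's paths that passed through $v$ and through the now-modified region can be re-threaded without exceeding capacity $k$ anywhere — in particular the new edges $(s,v)$ must each carry exactly $1$ unit, the vertex $v$ (now normal) must carry at most $k$, and the paths for the other source portals of $T$ (which may have routed through $T_i$ en route to $\skp_i$) must be rerouted to the surviving sink portals. Here I would lean on the fact that $T_i$ is a single directed path to $\skp_i$ so the only flow crossing any internal edge of $T_i$ toward $\skp_i$ is precisely the $k$ units destined for $\skp_i$ plus the incremental contributions of sources along the path; after deletion of $\skp_i$ and the nonsources of $T_i$, none of that flow needs to exist, and the reconnected sources reroute through $v$ to the sink portals that remain — whose total capacity is adequate because the feasibility of the pre-modification flow already certified enough draining capacity elsewhere, minus exactly the $k$ units that $\skp_i$ used to handle, which is exactly the amount now redirected. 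The second-most delicate point is verifying that a vertex $y$ with $\deg^+_B(y)=2$ on $P(v,\skp_i)$ genuinely splits the source supply on that path into the advertised amounts ($|S(P(y,\skp_i))|$, $|S(y)|$, $|S(P(y,\skp_i)\setminus\{y\})|$ being $k$ or $\le 2k-2$), which follows from flow conservation at $y$ together with the uniqueness-of-out-degree-$2$-vertex part of Observation~\ref{lem:structuralproperties}, but requires care to state cleanly. I would organize the whole proof as: (i) initial flow from $f\mid T$; (ii) a short sublemma on supply/capacity accounting along the directed path $T_i$; (iii) one paragraph per recursive case verifying the invariant is preserved.
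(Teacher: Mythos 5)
Your proposal follows essentially the same route as the paper's proof: induction over the recursion, with the base case supplied by the acyclic flow $f$ restricted to $T$, and a case-by-case flow-surgery argument in which the reconnected sources take over flow paths freed by the deletion of $\skp_i$ (Cases 1a/1b), the deleted part $X$ exactly accounts for the flow that used to drain through the removed sink portal (Cases 2a--2c, 3a), and the flow through $v$ is redirected to the new sink in Case 3b. The obstacles you single out (the rethreading in Cases 1a/1b, and the splitting of supply at the out-degree-2 vertex $y$) are precisely the points the paper spends its effort on, and your intended resolutions match the paper's.

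One concrete slip: in Case 3b you assert $\lvert S(T_1\cup T_2)\rvert = k$ exactly. That is false in general --- each of $T_1,T_2$ has strictly fewer than $k$ sources (since Case 3a failed), so the part can have anywhere from $k$ up to $2k-2$ sources, and indeed this is exactly the case that forces the $2k-2$ bound on part sizes. The correct accounting does not go through $\lvert X\rvert$ at all: the quantity that must be at most $k$ is the flow entering $T_1\cup T_2$ through $v$, namely $(k-\lvert S(T_1)\rvert)+(k-\lvert S(T_2)\rvert)$, and this is at most $k$ simply because $f^b$ is feasible at the vertex $v$; that flow is then rerouted to the new sink $\skp$. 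Your version of the argument, as stated, would fail whenever $\lvert X\rvert > k$, so you should replace the supply-side bookkeeping with the capacity bound at $v$.
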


\begin{proof}

The statement trivially holds at the beginning of the algorithm. In particular, each edge of $D$ supports at most  $k$ units of flow. Initially, every node in each tree $T$  has out-degree or in-degree 1 ensuring that no vertex supports more than $k$ units of flow.  We remark that during the execution of the algorithm it can be the case that there are vertices with both out- and in-degree more than one in some trees.

We inductively show that it holds after a single iteration.  Let $T^b$ be the tree at the beginning of the iteration and $T^e$ the tree at the end.  Let $f^b$ be the initial flow on $T^b$ that is feasible and we will use this to construct a flow $f^e$ on $T^e$.  We break the proof into the case that gets executed. 

Consider Recursive Case (1a). Let $S'$ be the set of sources from the tree $T_1$ which have newly created edges into $v$. There must be at least $|S'|$ units of flow in $f^b$ departing $v$ that (1) do not enter $T_1$ and (2) originate from sources in $T_1$ or the source portal $v$.  This is because $T_1$ has one sink $\skp_1$ that can support at most $k$ units of flow.  Let $F = \{f_1, f_2, \ldots \}$ denote these units of flow from $v$ to sinks. Set $f^e$ to include all flow paths in $f^b$ except those in $F$. Note these paths never enter the deleted nodes or edges.  Next observe that the sink $\skp_1$ is removed and $v$ is no longer a source portal.  Thus, we simply need to find a way to send flow for the sources in $S'$.  These nodes each route one unit of flow to their new edge directly to $v$ and then choose a unique flow path in $F$ and add these paths to $f^e$.  Notice that no vertex or edge capacity is violated by definition of $f^b$ and $F$.

Consider Recursive Cases (1b), (2b) and (2c). Again, let $S'$ be the set of sources which have newly created edges into $v$. Let $T_i$ be the tree that initially contained the sources.  In this case there must be exactly $|S'|$ units of flow in $f^b$ entering $v$ from nodes in $T_i$.  This is because in Case (1b) $T_i$  contains one source portal, one sink portal and there are $|S'|$  sources in $T_i$. In Case (2b), $T_i$  contains one sink portal, $|S(y)|=k$ and there are $|S'|$  other sources in $T_i$.  In Case (2c), $T_i$  contains one sink portal, $|S(P(y,\skp_i)\setminus \{y\})|=k$ and there are $|S'|$  other sources in $T_i$. Where this flow goes in $f^b$ will exactly correspond to where the flow in $f^e$ goes for the sources that now connect to $v$.  That is, $f^e$ has  the same flow on every edge as $f^b$ for the edges shared by $T^b$ and $T^e$. Additionally, there is one unit of flow from each of the $|S'|$ sources on the new edge that connects to $v$.

Consider Recursive Case (2a).  In this case the node $y$ has out-degree two and the sources $S(P(v, y) \setminus \{v,y\})$  have a new direct edge into $v$.  Notice that any node in $S(P(v, y) \setminus \{v,y\})$ cannot reach $\skp_i$ (since they can't route through $y$ to $\skp_i$).  Thus, these sources route to a sink through $v$. In this case, $f^e$ is the same as $f^b$ except these sources now  directly send their flow to $v$ and then follow their remaining path to a sink as in $f^b$.

Consider Recursive Case (3a).  Let $T_i$  be the tree with exactly $k$ sources.  There is one sink portal in $T_i$.  No flow in $f^b$ can enter $T_i$ via $v$.  This is because then $v$ has an outgoing edge in $T$ to a node in $T_i$ by Observation~\ref{obs:vtoskp}.  Then all $k$ sources in $T_i$ and at least one unit of flow from $v$ must route to $\skp_1$, contradicting that $\skp_1$ receives at most $k$ units of flow in $f^b$.  Set $f^e$ to be the same as $f^b$ for all shared edges between $T^b$ and $T^e$, except remove any flow sent through $v$ from sources in $T_i$ in $f^b$.

Consider Recursive Case (3b). We know $v$ is not a sink portal by Observation~\ref{obs:non-sink}. Therefore $T_1$ and $T_2$ exist, each with a unique sink portal.  At most $k$ units of flow in $f^b$ can route through $v$ by definition of $f^b$.  The flow $f^e$ is set to the same as $f^b$ on the shared edges between $T^b$ and $T^e$. Further, the flow from $v$ to sinks $\skp_1$ and $\skp_2$ in $f^b$ now go directly to the new sink added that is adjacent to $v$ in $T^e$.  This sink must receive at most $k$ units of flow.

\end{proof}

\subsection{Bounding the Size of the Parts in the Partition}

This section shows that every source is in some part $X$ in $\cX$ and that every $X \in \cX$ has size at most $2k-2$.  Thus we have a valid partition with each part having the desired size. 

\begin{lemma}
It is the case that $|X| \leq 2k -2$ for all $X \in \cX$.  Moreover, every source in $S$ is in some set in $\cX$.
\end{lemma}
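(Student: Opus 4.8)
The plan is to verify the two claims by a case analysis over the algorithm's cases, since parts are added to $\cX$ only in Base Case B and in Recursive Cases 2a, 2b, 2c, 3a, and 3b. For the size bound, I would go through each of these cases and check the stated cardinality condition directly. Base Case B adds a part of at most $2k-2$ sources by its guard. Cases 2a, 2c add the set $S(P(y,\skp_i))$ or $S(P(y,\skp_i)\setminus\{y\})$, of size at most $2k-2$ by the guard of 2a and exactly $k$ in 2c; Case 2b adds $S(y)$ of size exactly $k$; all of these are $\le 2k-2$ since $k \ge 2$ (for $k=1$ a gammoid is trivially a partition matroid, so we may assume $k\ge 2$). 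Case 3a adds $T_i$'s sources, exactly $k$ of them by the guard. The only case needing actual argument is Recursive Case 3b, which adds all sources in $T_1 \cup T_2$: here I must show $|S(T_1)| + |S(T_2)| \le 2k-2$. Since Cases 1 and 2 did not apply, $v$ is not a source portal and neither $T_1$ nor $T_2$ contains a source portal (else Case 1), so by Lemma~\ref{lem:feasibleflow} the feasible flow routes all sources of $T_i$ to the unique sink portal $\skp_i$, giving $|S(T_i)| \le k$; moreover $T_i$ contains no vertex of out-degree 2 in $B$ (else Observation~\ref{lem:structuralproperties} plus the failure of Case 2 would force $|S(T_i)| \ge k+1$ or exactly the forbidden values are avoided — I need to argue that the only remaining possibility $|S(T_i)| = k$ is caught by Case 3a). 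So $T_1, T_2$ each have at most $k$ sources, and if either had exactly $k$ then Case 3a would have fired; hence $|S(T_1)|, |S(T_2)| \le k-1$, giving $|S(T_1 \cup T_2)| \le 2k-2$.

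For the second claim, that every source lies in some part, I would argue that the algorithm terminates and that at termination no source remains unplaced. Termination follows from the observation already noted in the text: each recursive step removes at least one sink portal (Cases 1b, 2a, 2b, 2c, 3a remove $T_i$'s sink portal; Case 1a removes $\skp_1$; Case 3b removes two sink portals and adds one, a net decrease), and a tree with $<2$ sink portals and a source must eventually fall into a base case by Observation~\ref{obs:brancingexistalg} — actually the cleanest monotone quantity is the number of sink portals plus the number of sources, which strictly decreases each recursive step. When a tree's recursion hits Base Case A it has no sources; when it hits Base Case B all its remaining sources are placed into a part. So I would argue that every source of the original tree is either placed into a part in some recursive step, or survives (possibly with a new edge to $v$) into the next tree in the recursion, and by the termination argument it is eventually placed. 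A source is only ever deleted from $T$ when it is simultaneously added to a part $X$ (Cases 2a, 2b, 2c add the deleted sources $X$ to $\cX$; Case 3a deletes $T_i$ and adds its sources; Case 3b deletes $T_1, T_2$ and adds their sources); in Cases 1a, 1b sources of $T_i$ are not deleted but reconnected to $v$, so they persist. Hence no source is lost, and by termination each is eventually placed.

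The main obstacle I anticipate is Recursive Case 3b: establishing $|S(T_1)|, |S(T_2)| \le k-1$ rigorously requires combining the feasible-flow invariant (Lemma~\ref{lem:feasibleflow}), the structural facts about branching vertices and out-degree-2 vertices (Observations~\ref{lem:structuralproperties} and \ref{obs:vtoskp}), and a careful check that the preceding cases 2a–2c and 3a exhaust all configurations in which $|S(T_i)|$ could equal $k$ or in which $T_i$ has a relevant out-degree-2 vertex. Concretely, if $T_i$ had a vertex $y$ of out-degree 2 in $B$ lying on $P(v,\skp_i)\setminus\{v\}$, then one of the quantities $|S(P(y,\skp_i))|$, $|S(y)|$, $|S(P(y,\skp_i)\setminus\{y\})|$ would trigger 2a, 2b, or 2c unless all three miss their targets, and I must check that missing all three targets while also having $|S(T_i)| = k$ is impossible — this is where Observation~\ref{obs:exactk}-style counting ($S(T_i)$ splitting as $S(P(v,y)\setminus\{v,y\}) \cup S(y) \cup S(P(y,\skp_i)\setminus\{y\})$) and the out-degree structure pin things down. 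The remaining cases are routine guard-checking.
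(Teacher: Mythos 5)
Your proposal is correct and follows essentially the same route as the paper: parts added in Base Case B and Cases 2a--2c, 3a are bounded directly by their guards, and Case 3b is handled by showing each $T_i$ has at most $k$ sources (via the directedness of $P(v,\tilde{z}_i)$ from Observation~\ref{obs:vtoskp} and the feasible-flow invariant) and then strictly fewer than $k$ because Case 3a did not fire, while coverage follows because sources are only ever deleted from a tree at the moment they are placed into a part. The one point you flag as uncertain --- that Cases 2a--2c exhaust all configurations with an out-degree-2 vertex, so that Observation~\ref{obs:vtoskp} is applicable in Case 3b --- is likewise left implicit in the paper's own proof, so it is not a gap relative to the paper.
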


\begin{proof}
It is easy to see that every source in $S$ is in some set in $\cX$.  This is because sources are always contained in some tree of $\cT$ until they are added to a set placed in $\cX$ and the algorithm stops once there is no tree in $\cT$.

Now we show how to bound the size of sets in $\cX$.  Cases (1a) and (1b) do not add a set to $\cX$.  Cases  (2b), (2c) and (3a) add a set to $\cX$ of size $k$ by definition.  Case (2a) adds a set of size $2k-2$. 

Case (3b) is more interesting. Consider the execution of this case on a tree $T$ with branching vertex $v$. Let $\skp_1$ and $\skp_2$ be the corresponding sinks. We know $v$ is not a sink portal by Observation~\ref{obs:non-sink} and therefore $T_1$ and $T_2$ both exist. By Observation \ref{obs:vtoskp}, the paths from $v$ to $\skp_1$ and $\skp_2$ are directed paths from $v$ to $\skp_1$ and from $v$ to $\skp_2$. Hence, neither $T_1$ nor $T_2$ contains more than $k$ sources. Since case (3a) does not hold, $T_1$ and $T_2$ has strictly less than $k$ sources.  Thus, there are at most $2k-2$ sources in the set added to $\cX$.
\end{proof}

\subsection{Routing Sources within a Tree}

In this section, we show that the algorithm is indeed a partition reduction from a $k$-colorable gammoid to a $(2k-2)$-colorable partition matroid. That is, we show that every set $Y$, that is independent in the partition matroid represented by $\cX$ is also independent in the gammoid. More specifically, for any set $Y$ where $|Y \cap X| \leq 1$ for all $X \in \cX$ it is the case that $Y \in \cI$ or, equivalently, there is a feasible routing in the digraph $D$ from the sources in $Y$.
The key to this will be Lemma \ref{lemma:Trouting} which essentially states
that there exists a feasible routing in each tree $T \in \cT$.

\begin{lemma}
\label{lemma:Trouting} 
Let $T$ be an arbitrary tree in $\cT$. Let $\cX_T$ be the parts in $\cX$ that are also in $T$. For a set $Y$ with $\vert Y \cap X \vert \leq 1$ for all $X \in \cX$, let $Y_T$ be the subset of sources in $Y$ that are also in $T$. Then there is a feasible routing $R$ in $T$ from $Y_T$.
\end{lemma}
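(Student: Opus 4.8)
The natural approach is to induct on the recursion depth of the Partition Reduction Algorithm on $T$, tracing the recursion \emph{backwards}: start from the terminal tree (handled by Base Case A or Base Case B) and add back, one step at a time, the sources, sink portals, and non-source vertices that were removed, re-establishing a feasible routing at each step. Concretely, let $T^{(0)}, T^{(1)}, \dots, T^{(m)} = T$ be the sequence of trees so that $T^{(j-1)}$ is obtained from $T^{(j)}$ by one application of a recursive case (with $T^{(0)}$ the tree on which a base case fires). By Lemma~\ref{lem:feasibleflow} every $T^{(j)}$ supports a feasible flow, which we will use as a ``capacity certificate'': whenever we need to route one new unit out of $T^{(j)}$ we will reroute it along one of the $k$ flow-paths guaranteed by the feasible flow, using the fact that a routing uses each vertex at most once while the flow allows up to $k$. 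The base case is immediate: in Base Case B there are at most $2k-2$ sources and no source portal, so $|Y_{T^{(0)}}| \le 1$ per part but in fact we only need that the feasible flow in $T^{(0)}$ restricts to vertex-disjoint paths for the (at most one per part, but here we just need: few) chosen sources — more carefully, since there is no source portal and the feasible flow sends each source to a sink portal, and $Y$ picks at most one source per part, a routing exists by taking a subset of the flow paths; Base Case A is vacuous.

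The inductive step splits according to which recursive case produced $T^{(j)}$ from $T^{(j-1)}$. In each case the difference is: some tree $T_i$ (a pendant path hanging off a branching vertex $v$, of the form $P(v,\tilde z_i)\setminus\{v\}$ by Observation~\ref{lem:structuralproperties}) together with its sink portal $\tilde z_i$ and its non-source vertices, and possibly a source portal $v$ or a vertex $y$ of out-degree $2$, are reintroduced, while certain sources that had been spliced directly onto $v$ are detached and restored to their original positions in $T_i$. Given a feasible routing $R^{(j-1)}$ of $Y_{T^{(j-1)}}$ in $T^{(j-1)}$, I would build $R^{(j)}$ as follows: keep every path of $R^{(j-1)}$ that does not use the artificial edges $(s,v)$; for each source $s\in Y$ that was routed across an artificial edge $(s,v)$ in $R^{(j-1)}$, replace the prefix ``$s \to v$'' by the genuine sub-path of $T_i$ from $s$ up to $v$ (these are directed toward $v$ by Observation~\ref{lem:souresonback}), then keep the old suffix from $v$ onward. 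The subtlety is that this old suffix may now collide with newly-needed paths inside $T_i$ for the at-most-one source of $Y$ lying in the newly added part $X\subseteq T_i$ (in Cases 2a, 2b, 2c, 3a, 3b) or for the source portal $v$ that must now send one unit out (Case 1a). Here is where the feasible flow of $T^{(j)}$ (Lemma~\ref{lem:feasibleflow}) and the structural facts — $\deg_B^+(\tilde s)=2$ for source portals on such paths, at most one out-degree-$2$ vertex on the path (Observation~\ref{lem:structuralproperties}), and $|X|\le 2k-2$ — are used to argue that the extra demand created inside $T_i$ and at $v$ is at most what the feasible flow of $T^{(j)}$ can carry, so a rerouting of the offending path exists; I would formalize this via a small max-flow / Hall-type argument on the path $P(v,\tilde z_i)$, whose internal structure is almost linear, using the bound that at most $k$ units of flow (hence routing paths, after the one-per-part restriction) ever cross any vertex.

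I expect the main obstacle to be the bookkeeping in Cases 2a/2b/2c and 3b, where a new part $X$ of size up to $2k-2$ — i.e.\ potentially almost $2k$, exceeding $k$ — is created inside $T_i$, yet only \emph{one} element of $X$ lies in $Y$; one must show that this single element, plus the source portal contributions and the reattached detached sources, can all be routed disjointly out of $T_i$ through its unique sink portal $\tilde z_i$ and/or through $v$. The crux is quantitative: the pendant path $T_i$ together with $v$ can absorb one unit from $\tilde z_i$'s ``side'' and the detached sources route out through $v$'s side, and these two flows are along disjoint portions of the path once we locate the (unique) out-degree-$2$ vertex $y$; I would make this precise by charging each routed unit to a distinct flow-path of the feasible flow $f^{(j)}$ on $T^{(j)}$, the disjointness of those flow-paths inside the (near-linear) tree $T_i$ giving vertex-disjointness of the constructed routing. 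Once Lemma~\ref{lemma:Trouting} is in hand, unioning the per-tree routings $R_T$ with the highways yields a feasible routing of $Y$ in all of $D$, completing the proof of Theorem~\ref{thm:main}.
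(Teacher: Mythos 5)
Your overall strategy is exactly the paper's: reverse induction on the recursion, using the fact that all spliced sources enter $v$ by an artificial edge so at most one of them lies in $Y$, and then performing explicit path surgery on the pendant path $P(v,\tilde z_i)$. However, two concrete points in your plan do not go through as stated.

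First, Base Case A is not vacuous. A feasible routing in $T$ must contain one path for \emph{every} source portal $\tilde s\in\sourcep$, and a tree with no sources can still contain source portals. The paper handles this by scaling the feasible flow of Lemma~\ref{lem:feasibleflow} by $1/k$ and invoking integrality of flows to extract vertex-disjoint paths from the source portals to the sink portals; your plan needs this step (and, more generally, your inductive step must also carry the source-portal paths along, not only the paths of sources in $Y$).

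Second, and more importantly, your default rule ``replace the prefix $s\to v$ by the genuine sub-path of $T_i$ from $s$ up to $v$, then keep the old suffix'' can fail for directedness reasons. Observation~\ref{lem:souresonback} only directs the edges of the hanging tree $H'$ toward its attachment vertex $w'$ on the backbone; it says nothing about the orientation of $P(w',v)$. The pendant path $P(v,\tilde z_i)$ contains (at most) one vertex of out-degree $2$ (the source portal $\tilde s$ in Case 1b, or $y$ in Cases 2a--2c), and if $w'$ lies on the $\tilde z_i$-side of that vertex, the backbone between $w'$ and $v$ is directed \emph{away} from $v$, so no directed $s'\to v$ path exists. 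The fix --- which is the crux of the paper's case analysis --- is a swap: in that sub-case $s'$ is routed from $w'$ down to $\tilde z_i$, and the \emph{other} routed entity (the source portal $\tilde s$ in Case 1b, or the representative $s$ of the new part in Cases 2b) is the one that travels to $v$ and inherits the old suffix of $R^{(j-1)}$. You gesture at this with the ``disjoint portions of the path'' remark and a proposed max-flow/Hall argument, but a flow-based capacity count alone does not certify directedness; the argument really needs the explicit two-sub-case surgery keyed to the position of $w'$ relative to the unique out-degree-$2$ vertex (Observations~\ref{obs:vtoskp} and~\ref{lem:structuralproperties}). With those two repairs your plan coincides with the paper's proof.
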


\begin{proof}[Proof of Lemma \ref{lemma:Trouting}]
For convenience we consider one fixed tree $T$ and drop $T$ for the notation. 
The proof is by reverse induction on the recursion in our partitioning algorithm.
So the base case of the induction will be the base cases of the partitioning algorithm.

Consider Base Case (A). In this case $T$ does not contain any sources. By Lemma \ref{lem:feasibleflow} there exists a feasible flow. Scaling by a factor of $\frac{1}{k}$ implies that we can route one unit of flow from every source portal in $\tilde{S}$ to the collection of sink portals, such that no more than one unit of flow is routed through any edge or vertex. Then there must also exist an integer flow, which equals a routing in $T$.

Consider Base Case (B). In this case $\cX$ consists of a single part $X$ and there are no source portals in $T$. Let $s$ be the unique source in $X\cap Y$, then $s$ can be routed in $T$ as $s$ is routed in the flow $f$ that is guaranteed to exist by Lemma \ref{lem:feasibleflow}.

 For the recursive cases we construct a routing $R^b$ for the state $T^b$ of the
 tree $T$ before the recursive call from the routing $R^e$ that inductively
 exists for the state $T^e$ after the recursive calls.  
 
Consider Recursive Case (1a). Let $S'$ be the collection of sources in $T_1$. Every source in $S'$ has a directed edge into $v$ in $T^e$. There can be at most one source $s' \in S' \cap Y$ since by the induction hypothesis $R^e$ is a routing and at most one path can pass through $v$. If there is no $s' \in S' \cap Y$ then in the routing $R^b$, $v$ is routed along $P(v, \tilde{z_1})$ to $\skp_1$.
Otherwise let $s'$ be the unique source in $S' \cap Y$. Let $H' \in \cH$ be the tree that contains $s'$ and let $w'$ be the unique vertex in $H' \cap P(v, \skp_1)$. Then in $R^b$, $v$ follows the path of $s'$ in $R^e$ and $s'$ is routed through $H'$ to $w'$ and from $w'$ along the path $P(w',\skp_1)$ to $\skp_1$. We have to argue that there exists a directed path from $v$ to $\skp_1$ and from $s'$ to $\skp_1$ in $T_1$. Since $v$ is a source portal it has in-degree 0 and, hence, $P(v,\skp_1)$ does not contain a vertex with out-degree 2. By Observation \ref{obs:vtoskp}, $P(v,\skp_1)$ is a directed path from $v$ to $\skp_1$. Observation \ref{lem:souresonback} states that there also exists a directed path from $s'$ to $w'$. We also have to argue that no vertex or edge in $T^b$ has more than one path passing through it. This clearly holds for the edges and vertices shared by $T^b$ and $T^e$. Since the paths in $T_1$ are disjoint, it also holds for the edges and vertices that are in $T^b$ and not in $T^e$.

Consider Recursive Case (1b). Let $S'$ be the collection of sources in $T_i$. Every source in $S'$ has a directed edge into $v$ in $T^e$. There can be at most one $s' \in S' \cap Y$, since at most one path can pass through $v$ in $R^e$. If there is no $s' \in S' \cap Y$, then in the routing $R^b$ the source portal $\tilde{s}$ is routed through $P(\tilde{s}, \skp_i)$ to $\skp_i$. Otherwise let $s'$ be the unique source in $S' \cap Y$. Let $H' \in \cH$ be the tree that contains $s'$ and let $w'$ be the unique vertex in $H' \cap P(v, \skp_i)$. 

If $w' \in P(\tilde{s},v)$ then in $R^b$ $s'$ is routed through $H'$ to $w'$ and from $w'$ along the path $P(w',v)$ to $v$, where it follows the path of $s'$ in $R^e$. The source portal $\tilde s$ is routed through $P(\tilde{s}, \skp_i)$ to $\skp_i$.

If $w' \in P( \tilde{s}, \skp_i)$ then in $R^b$ $s'$ is routed through $H'$ to $w'$ and from $w'$ along the path $P(w',\skp_i)$ to $\skp_i$. The source portal $\tilde{s}$ is routed through $P(\tilde{s}, v)$ to $v$, where it follows the path of $s'$ in $R^e$. By Observation \ref{lem:structuralproperties}, $\tilde{s}$ is the only vertex with out-degree 2 in $T_i$. Hence, the path $P(\tilde{s},v)$ is a directed path from $\tilde{s}$ to $v$ and the path $P(\tilde{s},\skp_i)$ is a directed path from $\tilde{s}$ to $\skp_i$. Together with Observation \ref{lem:souresonback}, which implies that there exists a directed path from $s'$ to $w'$, it follows that the newly created paths above are directed paths from $s'$ and $\tilde{s}$ to $\tilde{Z}$. On the edges and vertices in $T^b$ that are also in $T^e$ the routing $R^b$ equals the routing $R^e$ and in both cases, the new paths in $T_i$ are disjoint. Hence, $R^b$ fulfills the property that no more than one path passes through every vertex or edge in $T^b$.

Consider Recursive Case (2a).
Let $H \in \cH$ be the tree that contains the selected source $s \in X$ and let $w$ be the unique vertex in $H \cap P(v, \skp_i)$. Then in $R^b$, $s$ is routed through $H$ to $w$ and from $w$ along the path $P(w,\skp_i)$ to $\skp_i$. Let $S'$ be the collection of sources in $S(P(v,y)\setminus \lbrace v,y \rbrace)$.  Every source in $S'$ has a directed edge into $v$ in $T^e$. There can be at most one source $s' \in S' \cap Y$ since at most one route can pass through $v$ in $R^e$. 

If there is such a source $s'\in S' \cap Y$, let $H' \in \cH$ be the tree that contains the source $s'$ and let $w'$ be the unique vertex in $H' \cap P(y, v)$. Then in $R^b$ $s'$ is routed through $H$ to $w$, from $w$ along $P(w,v)$ to $v$ where it follows the path of $s'$ in $R^e$.

Consider Recursive Case (2b).
Let $H \in \cH$ be the tree that contains the unique source in $s \in X \cap Y$. Let $S'$ be the collection of sources in $S(P(v,\skp_i)\setminus \lbrace y \rbrace)$.  Every source in $S'$ has a directed edge into $v$ in $T^e$. There can be at most one source $s' \in S' \cap Y$ since at most one route can pass through $v$ in $R^e$. 
If there is no $s' \in S' \cap Y$ then in $R^b$ $s$ is routed through $H$ to $y$ and from $y$ along the path $P(y, \tilde{z_i})$ to $\tilde{z_i}$. 
Otherwise, let $H' \in \cH$ be the tree that contains the unique source $s' \in S' \cap Y$ and let $w'$ be the unique vertex in $H' \cap P(v,\skp_i)$.
If $w' \in P(y, v)\setminus \lbrace y \rbrace$, then in $R^b$ $s'$ is routed through $H'$ to $w'$, from $w'$ along $P(w',v)$ to $v$ where it follows the path of $s'$ in $R^e$. Then $s$ is again routed through $H$ to $y$ and from $y$ along the path $P(y, \tilde{z_i})$ to $\tilde{z_i}$. 

If $w' \in P(y, \skp_i)\setminus \lbrace y \rbrace$, then in $R^b$ $s$ is routed through $H$ to $y$, from $y$ along $P(y,v)$ to $v$ where it follows the path of $s'$ in $R^e$. The source $s'$ is routed through $H'$ to $w'$ and from $w'$ along $P(w',\skp_i)$ to $\skp_i$.

Consider Recursive Case (2c). 
Let $H \in \cH$ be the tree that contains the unique source in $s \in X \cap Y$ and let $w$ be the unique vertex in $H \cap P(y,\skp_i)$. Let $S'$ be the collection of sources in $S(P(v,y))$.  Every source in $S'$ has a directed edge into $v$ in $T^e$. There can be at most one source $s' \in S' \cap Y$ since at most one route can pass through $v$ in $R^e$. 
If there is no $s' \in S' \cap Y$ then in $R^b$ $s$ is routed through $H$ to $w$ and from $w$ along the path $P(w, \tilde{z_i})$ to $\tilde{z_i}$. 
Otherwise, let $H' \in \cH$ be the tree that contains the unique source $s' \in S' \cap Y$ and let $w'$ be the unique vertex in $H' \cap P(v,y)$. Then in $R^b$ $s'$ is routed through $H'$ to $w'$, from $w'$ along $P(w',v)$ to $v$ where it follows the path of $s'$ in $R^e$. $s$ is still routed to $\tilde{z_i}$.

In the Recursive Cases (2a), (2b) and (2c), by Observation \ref{lem:structuralproperties}, $y$ is the only vertex with out-degree 2 in $T_i$. Hence, the path $P(y,v)$ is a directed path from $y$ to $v$ and the path $P(y,\skp_i)$ is a directed path from $y$ to $\skp_i$. Together with Observation \ref{lem:souresonback}, which implies that there exists a directed path from $s$ to $w$ and from $s'$ to $w'$, it follows that the newly created paths above are directed paths leaving at $s$ and $s'$. In all three cases, the routing $R^b$ on the edges and vertices in $T^b$ that are also in $T^e$ equals the routing $R^e$. In every case, the new paths in $T_i$ are disjoint. Hence in the Recursive Cases (2a), (2b) and (2c), $R^b$ fulfills the property that no more than one path passes through every vertex or edge in $T^b$.

Consider Recursive Case (3a). Let $H \in \cH$ be the tree that contains the selected source $s \in X \cap Y$ and let $w$ be the unique vertex in $H  \cap P(v, \tilde{z_i})$. Then in $R^b$ $s$ is routed through $H$ to $w$ and from $w$ along the path $P(w, \tilde{z_i})$ to $\tilde{z_i}$. By Observation \ref{obs:vtoskp}, the path $P(v, \tilde{z_i})$ is a directed path from $v$ to $\skp_i$. Together with Observation \ref{lem:souresonback}, which implies that there exists a directed path from $s$ to $w$, it follows, that the path from $s$ to $\skp_i$ is a directed path. Since the routing $R^b$ equals the routing of $R^e$ on all edges and vertices shared by $T^e$ and $T^b$ and the route from $s$ to $\skp_i$ only uses edges and vertices in $T_i$, also $R^b$ fulfills the property that no more than one path passes through every vertex or edge in $T^b$.

Consider Recursive Case (3b).
Let $H \in \cH$ be the tree that contains the selected source $s \in X \cap Y$ and let $w$
be the unique vertex in $H \cap P(v, \tilde{z_i})$. Then in $R^b$ $s$ is routed through $H$ to $w$ and from $w$ along the path $P(w, \tilde{z_i})$ to $\tilde{z_i}$.
Let $j \in \{1, 2 \} \setminus \{i\}$. If there is a source $s'$ routed to $\tilde{z}$ in $R^e$, then in
$R^b$ the route from $s'$ to $v$ equals the route from $s'$ to $v$ in $R^e$. Instead of going to $\tilde z$, we continue along the path $P(v, \tilde{z_j})$ to $\tilde{z_j}$. By Observation \ref{obs:vtoskp}, for $i \in \lbrace 1,2 \rbrace$ the paths $P(v, \tilde{z_i})$ are directed paths from $v$ to $\skp_i$. Together with Observation \ref{lem:souresonback}, which implies that there exists a directed path from $s$ to $w$, it follows that the path from $s$ to $\skp_i$ and from $v$ to $\skp_j$ are directed paths. Since the routing $R^b$ equals the routing of $R^e$ on all edges and vertices shared by $T^e$ and $T^b$ and the paths from $s$ to $\skp_i$ and from $v$ to $\skp_j$ are vertex-disjoint and in $T_1 \cup T_2$, also $R^b$ fulfills the property that no more than one path passes through every vertex or edge in $T^b$.
\end{proof}

\begin{lemma}\label{lem:routing}
Let $Y \subset S$ such that for all $X \in \cX$ it is the case that
$|X \cap Y|\leq1$. Then there exists a feasible routing from $Y$ in the digraph $D$.
\end{lemma}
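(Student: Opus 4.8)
The plan is to build a feasible routing in $D$ by gluing together the per-tree routings supplied by Lemma~\ref{lemma:Trouting} along the highways. First I would apply Lemma~\ref{lemma:Trouting} to every tree $T \in \cT$, obtaining a feasible routing $R_T$ in $T$ from $Y_T$; recall such a routing sends one path from each source of $Y_T$ and one path from each source portal of $T$, each path ending at a sink portal of $T$, with no vertex or edge of $T$ carrying more than one of these paths. I would also record the structural facts about the acyclic flow $f$ that make the gluing coherent: the flow-$k$ edges form vertex-disjoint directed paths (the highways); each source $s\in S$ lies in a tree of $\cT$ (its out-edge has flow $1\in(0,k)$ for $k\ge 2$), so $Y=\bigsqcup_{T\in\cT} Y_T$; each sink portal lying in a tree is the start of a unique highway (its single out-edge in $D$ carries all $k$ units and has no flow-$k$ in-edge feeding it), and tracing flow-$k$ edges forward from a sink portal leads along that highway to either a sink in $Z$ or a source portal; each source portal lies in a tree and is the endpoint of exactly one highway; distinct trees of $\cT$ are vertex-disjoint; a tree meets a highway only at a portal that is an endpoint of that highway; and a sink portal has out-degree $0$ inside its tree, so a tree-routing path can only end at a sink portal, never pass through one. (Here I use $k\ge 2$, the interesting case.)

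Next, for each $s\in Y$ lying in a tree $T_0$, I would define a walk $W_s$ in $D$ by tracing: follow the path of $s$ in $R_{T_0}$ to its terminal sink portal $\tilde z_0$; continue along the highway out of $\tilde z_0$; if it reaches a sink, stop; otherwise it reaches a source portal $\tilde s_1$ of some tree $T_1$, continue along the path of $\tilde s_1$ in $R_{T_1}$ to its terminal sink portal $\tilde z_1$, continue along the highway out of $\tilde z_1$, and so on, alternating tree-routing segments and highways. I would then show $\{W_s : s\in Y\}$ is a feasible routing in $D$ from $Y$, which amounts to: (i) each $W_s$ is a simple directed path terminating at a sink, and (ii) the walks are pairwise vertex- and edge-disjoint. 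Both reduce to one bookkeeping fact: each sink portal is the terminus of at most one tree-routing segment over all the $R_T$ (immediate from feasibility of each $R_T$), and this pins down uniquely the predecessor and the successor of every highway and of every tree-routing segment used; hence the pieces fit without branching or merging, so no vertex is ever revisited (giving simplicity within each $W_s$ and disjointness across them), and since $D$ is finite each $W_s$ must terminate — and it cannot stop at a sink portal (always continued by its highway), at a source portal (always continued by its tree-routing path), or at an interior vertex, so it stops at a sink. Thus every $s\in Y$ reaches a sink along vertex-disjoint paths, so $Y\in\cI$.

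The step I expect to be the main obstacle is the disjointness bookkeeping at the interface between trees and highways — precisely, ruling out that the traces of two distinct sources of $Y$ ever meet at a common vertex. The crux is that if they met, then following the construction upstream both traces would pass through a single sink portal as the terminus of two distinct segments of the same $R_T$, contradicting that $R_T$ is a feasible routing; making this rigorous requires carefully checking, case by case along the alternation, that the forward neighbour along a highway and the forward neighbour inside a tree (and likewise the backward neighbours) are well defined and unique, and that source portals and sink portals are never visited other than as segment endpoints. Ruling out an infinite trace rather than reaching a sink is then a formality once vertex-disjointness within a single $W_s$ is established.
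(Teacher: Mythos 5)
Your proposal is correct and takes essentially the same approach as the paper: the paper's own proof is a one-sentence appeal to Lemma~\ref{lemma:Trouting} plus the uniqueness of the highway into each source portal and out of each sink portal, which is exactly the gluing argument you spell out in detail.
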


\begin{proof}
This follows from Lemma \ref{lemma:Trouting} and the fact there is a unique highway into each source portal in each tree and a unique highway leaving each sink portal in every tree.
\end{proof}

\bibliography{bibgammoid}

\begin{thebibliography}{10}

\bibitem{Aharoni06theintersection}
Ron Aharoni and Eli Berger.
\newblock The intersection of a matroid and a simplicial complex.
\newblock {\em Transactions of the American Math Society}, 2006.

\bibitem{ahuja1993network}
Ravindra~K Ahuja, Thomas~L Magnanti, and James~B Orlin.
\newblock {\em Network Flows: Theory, Algorithms, and Applications}.
\newblock Prentice hall, 1993.

\bibitem{berczi2019list}
Krist{\'o}f B{\'e}rczi, Tam{\'a}s Schwarcz, and Yutaro Yamaguchi.
\newblock List colouring of two matroids through reduction to partition
  matroids.
\newblock {\em arXiv preprint arXiv:1911.10485}, 2019.

\bibitem{Edmonds65}
Jack Edmonds.
\newblock Minimum partition of a matroid into independent subsets.
\newblock {\em Journal of Research of the National Bureau of Standards},
  69B:67–72, 1965.

\bibitem{im2020matroid}
Sungjin Im, Benjamin Moseley, and Kirk Pruhs.
\newblock The matroid intersection cover problem.
\newblock {\em Operations Research Letters}, 49(1):17--22, 2020.

\bibitem{Kleinberg96}
Jon~M. Kleinberg.
\newblock Single-source unsplittable flow.
\newblock In {\em Symposium on Foundations of Computer Science}, pages 68--77,
  1996.

\bibitem{mustafa2009ptas}
Nabil~Hassan Mustafa and Saurabh Ray.
\newblock {PTAS} for geometric hitting set problems via local search.
\newblock In {\em Proceedings of the twenty-fifth annual symposium on
  Computational geometry}, pages 17--22, 2009.

\bibitem{oxley2006matroid}
James Oxley.
\newblock {\em Matroid Theory}.
\newblock Oxford graduate texts in mathematics. Oxford University Press, 2006.

\bibitem{seymour1998note}
Paul~D Seymour.
\newblock A note on list arboricity.
\newblock {\em Journal of Combinatorial Theory, Series B}, 72(1):150--151,
  1998.

\bibitem{Vazirani2001}
Vijay Vazirani.
\newblock {\em Approximation Algorithms}.
\newblock Springer-Verlag, 2001.

\bibitem{Williamson2011}
David~P. Williamson and David~B. Shmoys.
\newblock {\em The Design of Approximation Algorithms}.
\newblock Cambridge University Press, 2011.

\end{thebibliography}

\end{document}